\documentclass[pra,aps,showpacs,twocolumn,twoside,superscriptaddress]{revtex4}



\usepackage{amsmath,amsfonts,amssymb,caption,color,epsfig,graphics,graphicx,latexsym,mathrsfs,revsymb,theorem,url,verbatim,epstopdf}
\usepackage{float}
\usepackage{hyperref}

\hypersetup{colorlinks,linkcolor={blue},citecolor={blue},urlcolor={red}}

\usepackage{hyperref}

\newtheorem{definition}{Definition}
\newtheorem{proposition}[definition]{Proposition}
\newtheorem{lemma}[definition]{Lemma}

\newtheorem{theorem}[definition]{Theorem}
\newtheorem{corollary}[definition]{Corollary}
\newtheorem{conjecture}[definition]{Conjecture}

\newtheorem{remark}[definition]{Remark}
\newtheorem{example}[definition]{Example}
\newtheorem{question}[definition]{Question}


\def\squareforqed{\hbox{\rlap{$\sqcap$}$\sqcup$}}
\def\qed{\ifmmode\squareforqed\else{\unskip\nobreak\hfil
		\penalty50\hskip1em\null\nobreak\hfil\squareforqed
		\parfillskip=0pt\finalhyphendemerits=0\endgraf}\fi}
\def\endenv{\ifmmode\;\else{\unskip\nobreak\hfil
		\penalty50\hskip1em\null\nobreak\hfil\;
		\parfillskip=0pt\finalhyphendemerits=0\endgraf}\fi}
\newenvironment{proof}{\noindent \textbf{{Proof.~} }}{\qed}
\def\Dbar{\leavevmode\lower.6ex\hbox to 0pt
	{\hskip-.23ex\accent"16\hss}D}
\makeatletter
\def\url@leostyle{%
	\@ifundefined{selectfont}{\def\UrlFont{\sf}}{\def\UrlFont{\small\ttfamily}}}
\makeatother
\urlstyle{leo}

\def\bcj{\begin{conjecture}}
	\def\ecj{\end{conjecture}}
\def\bcr{\begin{corollary}}
	\def\ecr{\end{corollary}}
\def\bd{\begin{definition}}
	\def\ed{\end{definition}}
\def\bea{\begin{eqnarray}}
\def\eea{\end{eqnarray}}
\def\bem{\begin{enumerate}}
	\def\eem{\end{enumerate}}
\def\bex{\begin{example}}
	\def\eex{\end{example}}
\def\bim{\begin{itemize}}
	\def\eim{\end{itemize}}
\def\bl{\begin{lemma}}
	\def\el{\end{lemma}}
\def\bma{\begin{bmatrix}}
	\def\ema{\end{bmatrix}}
\def\bpf{\begin{proof}}
	\def\epf{\end{proof}}
\def\bpp{\begin{proposition}}
	\def\epp{\end{proposition}}
\def\bqu{\begin{question}}
	\def\equ{\end{question}}
\def\br{\begin{remark}}
	\def\er{\end{remark}}
\def\bt{\begin{theorem}}
	\def\et{\end{theorem}}

\def\btb{\begin{tabular}}
	\def\etb{\end{tabular}}

\newcommand{\nc}{\newcommand}


\def\a{\alpha}
\def\b{\beta}
\def\g{\gamma}
\def\d{\delta}
\def\e{\epsilon}

\def\t{\theta}

\def\p{\pi}
\def\r{\rho}

\def\ps{\psi}
\def\o{\omega}

\nc{\bbA}{\mathbb{A}} \nc{\bbB}{\mathbb{B}} \nc{\bbC}{\mathbb{C}}
\nc{\bbD}{\mathbb{D}} \nc{\bbE}{\mathbb{E}} \nc{\bbF}{\mathbb{F}}
\nc{\bbG}{\mathbb{G}} \nc{\bbH}{\mathbb{H}} \nc{\bbI}{\mathbb{I}}
\nc{\bbJ}{\mathbb{J}} \nc{\bbK}{\mathbb{K}} \nc{\bbL}{\mathbb{L}}
\nc{\bbM}{\mathbb{M}} \nc{\bbN}{\mathbb{N}} \nc{\bbO}{\mathbb{O}}
\nc{\bbP}{\mathbb{P}} \nc{\bbQ}{\mathbb{Q}} \nc{\bbR}{\mathbb{R}}
\nc{\bbS}{\mathbb{S}} \nc{\bbT}{\mathbb{T}} \nc{\bbU}{\mathbb{U}}
\nc{\bbV}{\mathbb{V}} \nc{\bbW}{\mathbb{W}} \nc{\bbX}{\mathbb{X}}
\nc{\bbZ}{\mathbb{Z}}


\nc{\bA}{{\bf A}} \nc{\bB}{{\bf B}} \nc{\bC}{{\bf C}}
\nc{\bD}{{\bf D}} \nc{\bE}{{\bf E}} \nc{\bF}{{\bf F}}
\nc{\bG}{{\bf G}} \nc{\bH}{{\bf H}} \nc{\bI}{{\bf I}}
\nc{\bJ}{{\bf J}} \nc{\bK}{{\bf K}} \nc{\bL}{{\bf L}}
\nc{\bM}{{\bf M}} \nc{\bN}{{\bf N}} \nc{\bO}{{\bf O}}
\nc{\bP}{{\bf P}} \nc{\bQ}{{\bf Q}} \nc{\bR}{{\bf R}}
\nc{\bS}{{\bf S}} \nc{\bT}{{\bf T}} \nc{\bU}{{\bf U}}
\nc{\bV}{{\bf V}} \nc{\bW}{{\bf W}} \nc{\bX}{{\bf X}}
\nc{\bZ}{{\bf Z}}


\nc{\cA}{{\cal A}} \nc{\cB}{{\cal B}} \nc{\cC}{{\cal C}}
\nc{\cD}{{\cal D}} \nc{\cE}{{\cal E}} \nc{\cF}{{\cal F}}
\nc{\cG}{{\cal G}} \nc{\cH}{{\cal H}} \nc{\cI}{{\cal I}}
\nc{\cJ}{{\cal J}} \nc{\cK}{{\cal K}} \nc{\cL}{{\cal L}}
\nc{\cM}{{\cal M}} \nc{\cN}{{\cal N}} \nc{\cO}{{\cal O}}
\nc{\cP}{{\cal P}} \nc{\cQ}{{\cal Q}} \nc{\cR}{{\cal R}}
\nc{\cS}{{\cal S}} \nc{\cT}{{\cal T}} \nc{\cU}{{\cal U}}
\nc{\cV}{{\cal V}} \nc{\cW}{{\cal W}} \nc{\cX}{{\cal X}}
\nc{\cZ}{{\cal Z}}


\nc{\hA}{{\hat{A}}} \nc{\hB}{{\hat{B}}} \nc{\hC}{{\hat{C}}}
\nc{\hD}{{\hat{D}}} \nc{\hE}{{\hat{E}}} \nc{\hF}{{\hat{F}}}
\nc{\hG}{{\hat{G}}} \nc{\hH}{{\hat{H}}} \nc{\hI}{{\hat{I}}}
\nc{\hJ}{{\hat{J}}} \nc{\hK}{{\hat{K}}} \nc{\hL}{{\hat{L}}}
\nc{\hM}{{\hat{M}}} \nc{\hN}{{\hat{N}}} \nc{\hO}{{\hat{O}}}
\nc{\hP}{{\hat{P}}} \nc{\hR}{{\hat{R}}} \nc{\hS}{{\hat{S}}}
\nc{\hT}{{\hat{T}}} \nc{\hU}{{\hat{U}}} \nc{\hV}{{\hat{V}}}
\nc{\hW}{{\hat{W}}} \nc{\hX}{{\hat{X}}} \nc{\hZ}{{\hat{Z}}}

\nc{\hn}{{\hat{n}}}


























\def\diag{\mathop{\rm diag}}
\def\dim{\mathop{\rm Dim}}



\def\max{\mathop{\rm max}}







\def\dg{\dagger}

\def\ox{\otimes}

\newcommand{\bra}[1]{\langle#1|}
\newcommand{\ket}[1]{|#1\rangle}
\newcommand{\proj}[1]{| #1\rangle\!\langle #1 |}

\newcommand{\abs}[1]{|#1|}

\newcommand{\tbc}{\red{TO BE CONTINUED...}}

\newcommand{\opp}{\red{OPEN PROBLEMS}.~}


\newcommand{\red}{\textcolor{red}}
















\def\Dbar{\leavevmode\lower.6ex\hbox to 0pt
	{\hskip-.23ex\accent"16\hss}D}


\begin{document}

\title{Mutually unbiased bases containing a complex Hadamard matrix of Schmidt rank three}
	
	\author{Mengyao Hu}\email[]{mengyaohu@buaa.edu.cn}
	\affiliation{School of Mathematical Sciences, Beihang University, Beijing 100191, China}

	\author{Yize Sun}\email[]{sunyize@buaa.edu.cn(corresponding author)}
	\affiliation{School of Mathematical Sciences, Beihang University, Beijing 100191, China}
	
\author{Lin Chen}\email[]{linchen@buaa.edu.cn(corresponding author)}
\affiliation{School of Mathematical Sciences, Beihang University, Beijing 100191, China}
\affiliation{International Research Institute for Multidisciplinary Science, Beihang University, Beijing 100191, China}

\date{\today}

\pacs{03.65.Ud, 03.67.Mn}


\begin{abstract}
Constructing four six-dimensional mutually unbiased bases (MUBs) is an open problem in quantum physics and measurement. We investigate the existence of four MUBs including the identity, and
a complex Hadamard matrix (CHM) of Schmidt rank three. The CHM is equivalent to a controlled unitary operation on the qubit-qutrit system via local unitary transformation $I_2\otimes V$ and $I_2\otimes W$. We show that $V$ and $W$ have no zero entry, and apply it to exclude examples as members of MUBs. We further show that the maximum of entangling power of controlled unitary operation is $\log_2 3$ ebits. We derive the condition under which the maximum is achieved, and construct concrete examples. Our results describe the phenomenon that if a CHM of Schmidt rank three belongs to an MUB then its entangling power may not reach the maximum.
\end{abstract}
	
	\maketitle
	
	

\section{Introduction}
\label{sec:int}

The existence of four six-dimensional mutually unbiased bases (MUBs) is one of main open problems in quantum mechanics and information \cite{Schwinger1960UNITARY,Ivonovic1981Geometrical,WOOTTERS1989Optimal}. The problem is equivalent to showing the existence of identity matrix, and three $6\times6$ complex Hadamard matrices (CHMs) satisfying certain constraint. The CHM is a unitary matrix with entries of idental modulus. By regarding the column vectors of each CHM as an orthonormal basis in the six-dimensional Hilbert space $\bbC^6$, the constraint says that every two vectors from different bases has inner product of modulus $1/\sqrt6$. We shall denote an MUB trio as the set of three CHMs with above constraint, though it is widely believed that the set does not exist \cite{bw08,bw09,jmm09,bw10,deb10,wpz11,rle11,mw12jpa102001,mpw16,Goyeneche13}. MUBs play a key role in quantum tomography, key distribution, error correction, uncertainty relation, and more quantum correlations. The incompatibility of MUBs has been quantified using the noise robustness \cite{PhysRevLett.122.050402}. Recently, the MUB problem has been investigated using the extensively useful notion in quantum information, i.e., Schmidt rank \cite{Chen2017Product}, \footnote{Given a bipartite unitary operation $U$ on $\bbC^m\otimes\bbC^n$, its Schmidt rank is the integer $k$ such that $U=\sum^k_{j=1}A_j\otimes B_j$ with linearly independent $m\times m$ matrices $A_j$'s, and linearly independent $n\times n$ matrices $B_j$'s.}, see also Figure \ref{fig:sr}. It has been shown that the CHM with Schmidt rank one or two does not belong to any MUB trio. As far as we know, the approach of studying MUBs in terms of Schmidt rank is not much understood. Ref. \cite{Chen2017Product} displays a promising perspective on this long-standing problem. It is both physically meaningful and mathematically operational to investigate CHMs of larger Schmidt rank.

\begin{figure}
\centering
\includegraphics[width=0.50\textwidth]{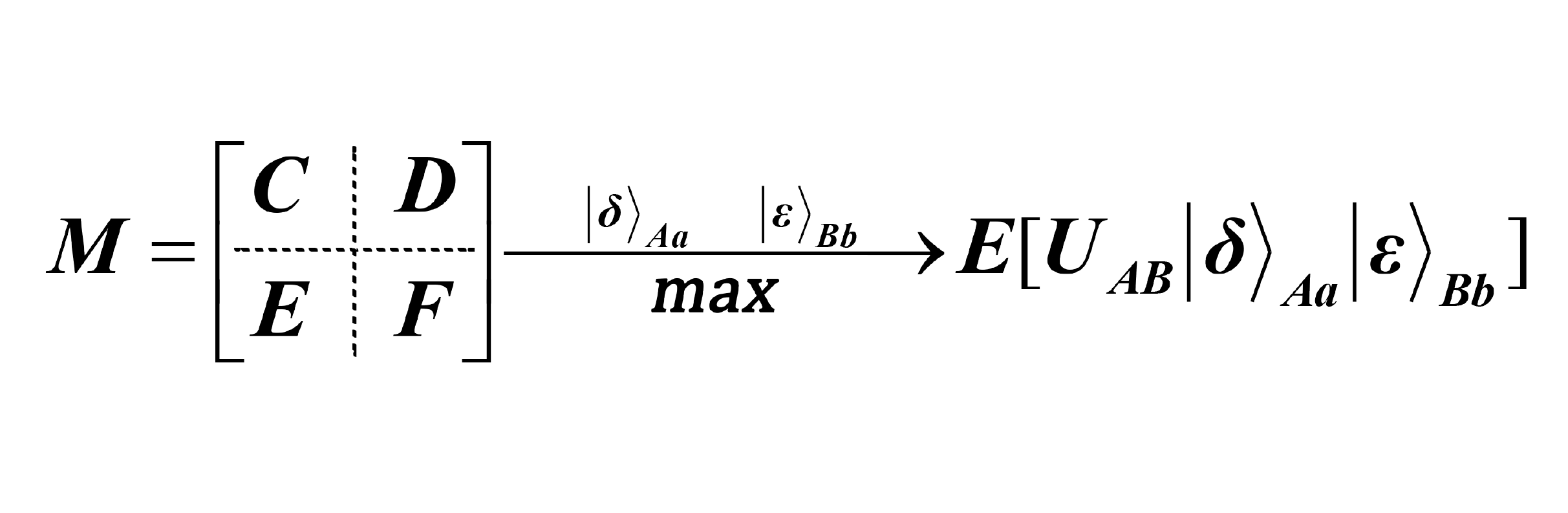}
\caption{The CHM $M$ consists of four blocks $C,D,E,F$. They are all $3\times3$ submatrices of entries of modulus $1/\sqrt6$. The Schmidt rank of $M$ is the number of linearly independent blocks in $C,D,E,F$. So the Schmidt rank is at most four. Recently it has been shown that if $M$ belongs to an MUB trio then $M$ has Schmidt rank three or four. We show that if $U_{AB}$ has Schmidt rank three then $M=(I_2\otimes V) U_{AB} (I_2 \otimes W)$ with $3\times3$ unitary matrices $V,W$ containing no zero entry, though numerical tests indicate that such $M$ may not exist. We further regard $M$ as a bipartite unitary operation, and investigate its entangling power. It is equal to the entangling power of $U_{AB}$, namely the maximum entanglement $E[U_{AB}\ket{\d}_{Aa}\ket{\e}_{Bb}]$ of the bipartite state $U_{AB}\ket{\d}_{Aa}\ket{\e}_{Bb}$ over all input states $\ket{\d}_{Aa}\ket{\e}_{Bb}$ with ancilla system $a,b$.}
\label{fig:sr}
\end{figure}

In this paper we shall investigate the CHM $M$ with Schmidt rank three. We review the preliminary results in Lemma \ref{le:sr2} and \ref{le:mubtrio}. We also construct examples of Schmidt-rank-three CHMs satisfying certain linear dependence in Lemma \ref{le:lineardependence=sr3}. Next, we characterize the expressions and properties of Schmidt-rank-three CHMs $M$ in Lemma \ref{le:sr3} and \ref{le:v:1zero}. It has been proven in \cite{cy14ap} that $M$ is a controlled unitary operation on $\bbC^2\otimes\bbC^3$, see Figure \ref{fig:chm=controlled}. So we obtain the decomposition $M=(I_2\otimes V) U_{AB} (I_2 \otimes W)$ with some $3\times3$ unitary matrices $V,W$, and $U_{AB}$ a controlled unitary operation controlled from the B side in the computational basis $\{\ket{j}\}$. Assisted by Lemma \ref{le:sr3} and \ref{le:v:1zero}, we show that if $V$ or $W$ has a zero entry then $M$ does not belong to any MUB trio in Theorem \ref{thm:sr3}. We show that some examples of $M$ do not belong to any MUB trio. It indicates that no Schmidt-rank-three CHM may belong to an MUB trio.
Since $M$ has Schmidt rank three, the maximum of entangling power of $U_{AB}$ is $\log_2 3$ ebits. In Eq. \eqref{eq:ujpo}, we analytically derive the condition under which the maximum is achieved. In Example \ref{ex:maxep=log23} we construct a concrete $U_{AB}$ by which the condition is satisfied. We also describe the lower bound of entangling power of general $U_{AB}$ in Figure \ref{fig:b3}, \ref{fig:b1} and \ref{fig:di}. In particular, the lower bound of a CHM as bipartite unitary operation may not reach the maximum entangling power, if the CHM belongs to an MUB trio. Our results show the potential connection between the open problem on the existence of four six-dimensional MUBs, and the entangling power of bipartite unitary operations in terms of Schmidt rank.

\begin{figure}
\center{\includegraphics{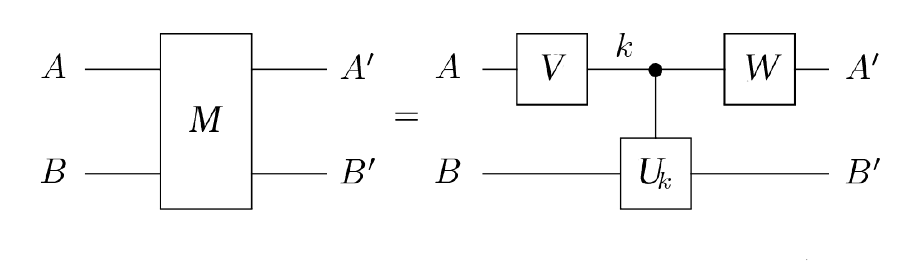}}
\caption{The CHM $M$ of Schmidt rank three is a bipartite controlled unitary operation on the space $\bbC^2\otimes\bbC^3$ controlled by system $B$. This is expressed as $M=(I\ox V)(\sum^3_{k=1}U_k \ox \proj{k})(I\ox W)$, where $V$, $U_k$, and $W$ are local unitary gates. The output systems $A'$ and $B'$ have the same size as that of $A$ and $B$, respectively. It implies that the CHM $M$ may be reliably implemented by experiments.}
\label{fig:chm=controlled}
\end{figure}

In quantum physics, the bipartite unitary operation is used for implementing quantum computing and cryptography. If the operation has Schmidt rank larger than one then it is nonlocal and can create entanglement. We ask for the maximum entanglement a nonlocal operation can create using a product state as an input state. The maximum is called the \textit{entangling power} of the nonlocal operation. The input state contains ancilla systems not affected by the operation. The study of entangling power has received extensive attentions in the past decades \cite{Nielsen03,lsw09,cy15,cy16,cy13,sm10}.
The entanglement power of a bipartite unitary operation is a lower bound of the entanglement required for realizing the operation under local operations and classical communications (LOCC). It is known that bipartite unitary operation with Schmidt rank at most three is a controlled unitary operation, and thus may be more easily implemented in experiments \cite{cy13,cy14ap}. Hence, studying the six-dimensional CHM in terms of Schmidt rank connects the MUB problem and bipartite unitary operations. Besides, the technique employed in our results relate the MUB problem to other fundamental notions like multiqubit entangled states, unextendible product basis and Birkhoff matrices \cite{Baerdemacker2017The,Chen2018The,kai=lma}. Very recently, bipartite operator of Schmidt rank three has been applied to the study of entanglement distillability of three bipartite reduced density operators from the same tripartite state \cite{1909.02748}.

The rest of this paper is structured as follows. In Sec. \ref{sec:pre} we introduce preliminary results from linear algebra and quantum information. In Sec. \ref{sec:result} we characterize the properties of Schmidt-rank-three CHMs. The proofs are given in Appendix \ref{app:sr3}, \ref{app:le:v:1zero} and \ref{app:thm 6 (iii)}. In Sec. \ref{sec:ep} we investigate the entangling power of Hadamard matrix, taken as a bipartite unitary operation on $\bbC^2\otimes\bbC^3$. We conclude in Sec. \ref{sec:con}.

\section{Preliminaries}
\label{sec:pre}

In this section we introduce the fundamental knowledge used throughout the paper. Let $\bbC^d$ be the $d$-dimensional Hilbert space, and $d=\prod^n_{j=1}d_j^{k_j}$ the prime factor decomposition such that $d_1^{k_1}<d_2^{k_2}<...<d_n^{k_n}$. It has been shown that there are at most $d+1$ and at least $d_1^{k_1}+1$ MUBs \cite{Ivonovic1981Geometrical,WOOTTERS1989Optimal}. In particular, the upper bound $d+1$ is achieved when $n=1$, namely $d$ is the prime power. On the other hand, if $n>1$ say $d=6$ then constructing MUBs becomes a hard problem.
The traditional way of studying the existence of MUBs employs Pauli groups, while we will do it using the Schmidt rank of CHMs. We say that two $mn\times mn$ matrices $A,B$ are equivalent if there exists a monomial unitary matrix $P\otimes Q$ and $R\otimes S$ with $P,R$ on $\bbC^m$ and $Q,S$ on $\bbC^n$, such that $(P\otimes Q)A(R\otimes S)=B$. If $A$ is a CHM then one can show that $B$ is also a CHM, and has the same Schmidt rank as that of $A$. The following result is from Lemma 13 of \cite{Chen2017Product}. It characterizes the expressions of order-six CHM of Schmidt rank up to three.
\bl
\label{le:sr2}
Any Schmidt-rank-three order-six CHM can be written as
\begin{widetext}
\bea
\label{eq:sr3}
&&
\bbH_3:=(I_2 \ox V)\cdot
\notag\\&&
\left(
                   \begin{array}{cccccc}
                     \cos\a_1 &  0 & 0 & e^{i\g_1}\sin\a_1 & 0 & 0 \\
                     0 &  \cos\a_2 &  0 & 0 & e^{i\g_2}\sin\a_2 & 0 \\
                     0 & 0 &  \cos\a_3 &  0 & 0 & e^{i\g_3}\sin\a_3 \\
                    e^{i\b_1}\sin\a_1 &  0 & 0 & -e^{i(\b_1+\g_1)}\cos\a_1 & 0 & 0 \\
                     0 &  e^{i\b_2}\sin\a_2 &  0 & 0 & -e^{i(\b_2+\g_2)}\cos\a_2 & 0 \\
                     0 & 0 &  e^{i\b_3}\sin\a_3 &  0 & 0 & -e^{i(\b_3+\g_3)}\cos\a_3 \\
                     \end{array}
                 \right)
\notag\\&&\cdot
(I_2 \ox W),
\eea
\end{widetext}
where $V$ and $W$ are
order-three unitary matrices, the first column vector of $W$ have all nonnegative and real elements, the matrix
\begin{eqnarray}
\label{eq:h4}
\bbH_4
:=
\bma
\cos\a_1 & e^{i\b_1}\sin\a_1 & e^{i\g_1}\sin\a_1
& -e^{i(\b_1+\g_1)}\cos\a_1\\
\cos\a_2 & e^{i\b_2}\sin\a_2 & e^{i\g_2}\sin\a_2
& -e^{i(\b_2+\g_2)}\cos\a_2\\
\cos\a_3 & e^{i\b_3}\sin\a_3 & e^{i\g_3}\sin\a_3
& -e^{i(\b_3+\g_3)}\cos\a_3\\
\ema	
\notag\\
\end{eqnarray}
has rank three by the parameters
$
\a_1,\a_2,\a_3\in[0,\p/2],~~~~\b_1,\b_2,\b_3,\g_1,\g_2,\g_3\in[0,2\p).
$ Hence (i.a) If $\bbH_3$ is a member of some MUB trio, then $\a_1,\a_2,\a_3\in(0,\p/2)$ and two of them are not equal.
\el

Recently, Ref. \cite{Chen2017Product} has shown that the CHM of Schmidt rank one or two does not belong to any MUB trio. So the next step is to treat CHMs of Schmidt rank three. Such CHMs exist. For example
\begin{eqnarray}
\label{eq:rank3chm}
{1\over\sqrt6}
\bma
y & -y & z & 1 & 1 & 1 \\
y\o^2 & -y\o^2 & z\o & 1 & 1 & \o \\
y\o & -y\o & z\o^2 & 1 & 1 & \o^2 \\
1 & 1 & 1 & x & -x & -z^* \\
1 & 1 & \o & x\o^2 & -x\o^2 & -z^*\o \\
1 & 1 & \o^2 & x\o & -x\o & -z^*\o^2 \\
\ema,
\end{eqnarray}
where $x,y,z$ are complex numbers of modulus one and ${z\over y}\ne {-z^*\over x}$. By applying Lemma \ref{le:sr2}, we construct more examples of Schmidt-rank-three CHMs satisfying certain linear dependence.
\begin{lemma}
\label{le:lineardependence=sr3}
Let $M$ be a Schmidt-rank-three order-six CHM whose four order-three submatrices are $A,B,C,D$. Then $M$ exists when one of the following two conditions is satisfied.

(i) Any three of $A,B,C,D$ are linearly independent. 	

(ii) $A,B,C$ are linearly dependent and any two of $A,B,C$ are linearly independent.	
\end{lemma}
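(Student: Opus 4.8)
The plan is to reduce the existence statement, via Lemma~\ref{le:sr2}, to a question about four vectors in $\bbC^3$, and then to exhibit explicit matrices realizing each linear-dependence pattern.

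By Lemma~\ref{le:sr2} any candidate may be taken in the form $\bbH_3$ of \eqref{eq:sr3}, so it suffices to produce angles $\a_1,\a_2,\a_3$, phases $\b_j,\g_j$ and order-three unitaries $V,W$ such that $\bbH_3$ has all entries of modulus $1/\sqrt6$ (hence is a CHM) and its four order-three blocks exhibit the prescribed linear relations. Let $X,Y,Z,T\in\bbC^3$ be the diagonals of the four $3\times3$ blocks of the central factor of \eqref{eq:sr3}, so $X_j=\cos\a_j$, $Y_j=e^{i\g_j}\sin\a_j$, $Z_j=e^{i\b_j}\sin\a_j$ and $T_j=-e^{i(\b_j+\g_j)}\cos\a_j$. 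Since $I_2\ox V$ and $I_2\ox W$ are invertible, the four blocks $V\diag(X)W,\dots,V\diag(T)W$ of $\bbH_3$ satisfy exactly the same linear relations as $X,Y,Z,T$, and the Schmidt rank of $\bbH_3$ equals the dimension of $\lin\{X,Y,Z,T\}$. As $\bbH_4$ has rank three, these four vectors span a three-dimensional space and carry a unique (up to scale) relation $u_1X+u_2Y+u_3Z+u_4T=0$; condition (i) asks for $u_1u_2u_3u_4\neq0$ and condition (ii) for exactly one $u_\ell=0$ (relabelling the blocks so that these are $A,B,C$).

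A structural observation that guides the construction: writing the relation in coordinates, dividing by $\sin\a_j$ or $\cos\a_j$, conjugating and using $\cot\a_j\in\bbR$ shows that the zero-pattern of $(u_1,u_2,u_3,u_4)$ is constrained — for instance, if $u_4=0$ but $u_1\neq0$ one is forced to $T_j=\overline{(u_3/u_1)}\,Y_j+\overline{(u_2/u_1)}\,Z_j$, so $X,Y,Z,T$ all collapse into $\lin\{Y,Z\}$ and the Schmidt rank falls below three (and symmetrically for $u_1=0$ with $u_4\neq0$). One must therefore first work out which zero-patterns are compatible with Schmidt rank three and then target the compatible one in each case. For (i) a generic admissible choice of distinct $\a_j\in(0,\p/2)$ and of the phases works, since the vanishing of any maximal minor of $\bbH_4$ is a proper real-analytic condition; one then exhibits suitable $V,W$ rendering $\bbH_3$ a CHM and records a concrete matrix, in the style of \eqref{eq:rank3chm}. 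For (ii) one starts from a prescribed linear relation among three of $X,Y,Z,T$, solves it together with the unit-modulus constraints for $\a_j,\b_j,\g_j$ along the resulting locus, and verifies that the fourth vector stays out of the span of the other three and that the triple is pairwise independent; again $V,W$ are supplied and an explicit matrix is recorded.

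The hard part is the constant-modulus requirement: Lemma~\ref{le:sr2} gives only the right \emph{shape} for $\bbH_3$, and turning that shape into an honest complex Hadamard matrix imposes nontrivial compatibility conditions on $V,W$ (hence indirectly on $\a_j,\b_j,\g_j$) which must hold simultaneously with the prescribed block degeneracy; and, in case (ii), one must make sure that imposing one linear relation among three of the blocks does not — through the conjugation mechanism above — force a second one and collapse the Schmidt rank. Producing explicit parameter families meeting all of these constraints at once is the substantive content of the proof.
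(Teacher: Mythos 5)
Your reduction is sound as far as it goes: since $I_2\ox V$ and $I_2\ox W$ are invertible, the linear relations among the four blocks of $\bbH_3$ are exactly those among the diagonal vectors $X,Y,Z,T$, i.e.\ among the columns of $\bbH_4$, so the lemma reduces to finding parameters for which $\bbH_3$ is a genuine CHM and $\bbH_4$ has the prescribed rank pattern. But you then stop exactly where the proof begins. You write that for (i) ``a generic admissible choice of distinct $\a_j$ and of the phases works'' and that ``one then exhibits suitable $V,W$ rendering $\bbH_3$ a CHM,'' and for (ii) that ``$V,W$ are supplied and an explicit matrix is recorded'' — and you yourself concede in the last paragraph that producing such parameters ``is the substantive content of the proof.'' An existence lemma is not proved by announcing that examples will be recorded. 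Worse, the genericity claim in (i) is not merely unproven but misleading: the constant-modulus condition forces $\sum_j\cos^2\a_j=\tfrac32$ (the Frobenius norm of each $3\times3$ block of a $6\times6$ CHM is $\sqrt{3/2}$), so a generic choice of distinct $\a_j$ admits \emph{no} unitaries $V,W$ making $\bbH_3$ a CHM; the existence of $V,W$ is the binding constraint, not an afterthought, as the paper's Lemmas \ref{le:sr3} and \ref{le:v:1zero} make clear.

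The missing idea is a single cheap specialization that the paper uses: set $\a_1=\a_2=\a_3=\p/4$, $V=\tfrac{1}{\sqrt3}\bma 1&1&1\\1&\o&\o^2\\1&\o^2&\o\ema$ and $W=I_3$. Then every block of the central factor is $\tfrac{1}{\sqrt2}$ times a diagonal phase matrix, so every entry of $\bbH_3$ automatically has modulus $\tfrac{1}{\sqrt2}\cdot\tfrac{1}{\sqrt3}=\tfrac{1}{\sqrt6}$ and $\bbH_3$ is a CHM for \emph{any} phases $\b_j,\g_j$. The modulus constraint thus evaporates, and all that remains is to pick phases giving the desired rank pattern of $\bbH_4$ — e.g.\ $\b=(0,\p/6,\p/3)$, $\g=(0,\p/3,2\p/3)$ for (i) (every $3\times3$ minor of $\bbH_4$ nonzero), and $\b_1=\b_2$, $\g_1=\g_2$, $\b_3-\b_1\ne\g_3-\g_1$ for (ii) (leftmost $3\times3$ submatrix of rank two with all $3\times2$ submatrices of rank two). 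Without this construction, or some equally explicit one, your argument does not establish existence in either case.
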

\begin{proof}
(i) We choose $\a_i={\p\over4}$, $V={1\over\sqrt3}
\bma 1&1&1\\1&\o&\o^2\\1& \o^2 & \o \ema$, and $W=I_3$	in \eqref{eq:sr3}. Then $\bbH_3$ is an order-six CHM. The assertion is equivalent to finding $\b_i$ and $\g_i$ such that any order-three submatrix of $\bbH_4$ in \eqref{eq:h4} has rank three. An example is
$
\b_1=0,\b_2={\p\over6},\b_3={\p\over3},
\g_1=0,\g_2={\p\over3},\g_3={2\p\over3}.
$

(ii) We need find $\a_i,\b_i,\g_i$ such that the leftmost order-three submatrix $M$ of $\bbH_4$ has rank two, and any $3\times2$ submatrix of $M$ also has rank two. An example is $\a_1=\a_2=\a_3=1$, $\b_1=\b_2$, $\g_1=\g_2$, and $\b_3-\b_1\ne\g_3-\g_1$.
\end{proof}

Next we review the following observation from \cite[Lemma 11]{Chen2017Product}. It explains the necessary condition by which a $6\times6$ CHM is a member of some MUB trio. It will used frequently in the proofs for the claims in the next section. We shall refer to a \textit{subunitary matrix} as a matrix proportional to a unitary matrix.
\begin{lemma}
\label{le:mubtrio}
Any MUB trio contains no a real $3\times2$ or $2\times3$ real submatrix, or a $3\times3$ subunitary matrix.
\end{lemma}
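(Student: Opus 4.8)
The plan is to reduce the three parts to a single mechanism: a forbidden sub-structure in a member $M$ of the trio forces $M$ into a rigid normal form, after which the requirement that the two transition matrices from $M$ to the other members of the trio are again complex Hadamard matrices (CHMs) is shown to be infeasible. I use freely the standard normalizations of an MUB trio $\{M,N,N'\}$: one may replace $M$ by $PMQ$, by $DM$, or by $MD'$ (simultaneously replacing $N,N'$ by $PN,PN'$, by $DN,DN'$, or by $ND',N'D'$) for permutation matrices $P,Q$ and diagonal unitaries $D,D'$; and one may replace the whole trio by $\{M^{\dagger},M^{\dagger}N,M^{\dagger}N'\}$, which is again an MUB trio since left-multiplication by $M^{\dagger}$ preserves the mutual unbiasedness of $I,M,N,N'$. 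The last normalization also reduces the $2\times3$ case to the $3\times2$ case, because $M$ has a real $2\times3$ submatrix iff $M^{\dagger}$ (its conjugate transpose) has a real $3\times2$ submatrix. Hence it is enough to exclude, in an arbitrary member $M$, (a) a $3\times3$ subunitary submatrix, and (b) a real $3\times2$ submatrix.

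For (a): a permutation puts the subunitary block in the top-left corner, $M=\bigl(\begin{smallmatrix}M_{11}&M_{12}\\ M_{21}&M_{22}\end{smallmatrix}\bigr)$ with $M_{11}M_{11}^{\dagger}=\lambda I_3$. Since $M$ is a CHM, every row of $M_{11}$ has squared norm $1/2$, so $\lambda=1/2$; orthonormality of the rows and columns of $M$ then forces $M_{12}M_{12}^{\dagger}=M_{21}M_{21}^{\dagger}=M_{22}M_{22}^{\dagger}=\tfrac12 I_3$ together with $M_{11}M_{21}^{\dagger}+M_{12}M_{22}^{\dagger}=0$ and $M_{11}^{\dagger}M_{12}+M_{21}^{\dagger}M_{22}=0$. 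So each $F_{ij}:=\sqrt2\,M_{ij}$ is an order-three CHM and, with $G:=F_{11}F_{21}^{\dagger}=-F_{12}F_{22}^{\dagger}$ a unitary, $M=\tfrac1{\sqrt2}\bigl(\begin{smallmatrix}GF_{21}&-GF_{22}\\ F_{21}&F_{22}\end{smallmatrix}\bigr)$; since every order-three CHM is Hadamard-equivalent to the Fourier matrix, one may further normalize $M_{21}$ to be proportional to it. Now invoke the member $N$: writing the columns of $\sqrt2\,M^{\dagger}N$ as $\binom{u_l}{v_l}$, the CHM condition on $M^{\dagger}N$ says every entry of $u_l$ and $v_l$ has modulus $1/\sqrt3$; inverting, the columns of $N$ are $\binom{p_l}{q_l}$ with $p_l=\tfrac12 G(F_{21}u_l-F_{22}v_l)$ and $q_l=\tfrac12(F_{21}u_l+F_{22}v_l)$, so the CHM condition on $N$ forces every entry of $p_l$ and $q_l$ to have modulus $1/\sqrt6$, and unitarity of $M^{\dagger}N$ gives $\langle u_l,u_m\rangle+\langle v_l,v_m\rangle=2\delta_{lm}$. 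The task is then to show that no six pairs $(u_l,v_l)$ of "normalized CHM columns" can satisfy these componentwise modulus conditions coupled through $F_{21},F_{22},G$ — bringing in, if necessary, the fourth basis $N'$ and the CHM condition on $N^{\dagger}N'$.

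For (b): if columns $c_1,c_2$ of $M$ have all their first three entries in $\{\pm1/\sqrt6\}$, then after sign flips of rows $1,2,3$ and of $c_2$ and a permutation of rows $1,2,3$ I normalize $(c_1)_{1,2,3}=\tfrac1{\sqrt6}(1,1,1)$ and $(c_2)_{1,2,3}=\tfrac1{\sqrt6}(1,1,\varepsilon)$, $\varepsilon\in\{1,-1\}$ (both can occur; when $\varepsilon=1$, Cauchy--Schwarz on the last three coordinates also forces $(c_1)_{4,5,6}=-(c_2)_{4,5,6}$). In either case $\norm{(c_i)_{4,5,6}}^2=\tfrac12$ and $\langle(c_1)_{4,5,6},(c_2)_{4,5,6}\rangle=-(2+\varepsilon)/6$. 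Feeding this into the third basis as above, for every column $n_l$ of $N$ the CHM condition on $M^{\dagger}N$ gives $\abs{\langle c_1,n_l\rangle}=\abs{\langle c_2,n_l\rangle}=1/\sqrt6$, i.e.\ two modulus-one conditions on sums of six unit-modulus numbers (the rescaled entries of $n_l$) with fixed real coefficients $(1,1,1)$ and $(1,1,\varepsilon)$ in the first three slots; and likewise from $N^{\dagger}N'$. The claim is that no CHM $N$, unbiased also to $N'$, can meet all of these.

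The main obstacle, in both cases, is precisely this endgame: the individual modulus conditions are rigid but not visibly contradictory, and extracting an impossibility from them is where the real work lies. I expect the argument to proceed by a careful phase analysis — dephasing the relevant order-three CHMs to standard form, and then showing that the $1/\sqrt6$-modulus constraints on $p_l,q_l$ (resp.\ on $n_l$) force the entrywise phases of $u_l,v_l$ (resp.\ of $n_l$) into a short list of possibilities, each eliminated using the remaining columns together with the transition matrix $N^{\dagger}N'$. Everything up to that point is bookkeeping; it is the combinatorics specific to order-six complex Hadamard matrices that must close the argument.
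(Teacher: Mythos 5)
First, a point of reference: the paper does not prove this lemma at all --- it is imported verbatim as Lemma 11 of \cite{Chen2017Product}, and the present paper only invokes it (via the submatrices it calls $Y_2$ and $Y_3$) in the proof of Theorem \ref{thm:sr3}. So your attempt has to be measured against the proof in that reference, which carries out precisely the case analysis that you set up and then defer.

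Your reductions are sound: passing to $M^{\dagger}$ to convert the $2\times3$ case into the $3\times2$ case, the observation that a $3\times3$ subunitary block forces all four blocks of $M$ to be subunitary and yields $M=\tfrac{1}{\sqrt2}\bigl(\begin{smallmatrix}GF_{21}&-GF_{22}\\ F_{21}&F_{22}\end{smallmatrix}\bigr)$ with order-three CHMs $F_{21},F_{22}$ and a unitary $G$, and the Cauchy--Schwarz rigidity in the real-submatrix case are all correct bookkeeping. But there is a genuine gap, and you name it yourself: both branches of the argument end with ``the task is then to show\dots'' and ``the claim is that\dots''. The entire content of the lemma is the \emph{impossibility} of satisfying the resulting modulus constraints, and that is exactly the step you do not execute; as written, nothing you prove excludes the existence of the six pairs $(u_l,v_l)$ in case (a) or of the basis $N$ in case (b) --- you have only listed necessary conditions, not derived a contradiction. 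A complete proof must either carry out the deferred phase analysis or replace it by a structural argument; for instance, in case (a) the span of the first three columns of $M$ is $\{\bigl(\begin{smallmatrix}Gx\\ x\end{smallmatrix}\bigr):x\in\bbC^3\}$, which contains vectors of small support relative to the computational basis, and in case (b) with $\varepsilon=1$ the span of $c_1,c_2$ contains a vector proportional to $(1,1,1,0,0,0)^{T}$ --- facts of this kind, played against unbiasedness with the identity basis and with $N$, are what actually close the argument in the cited reference. Until that endgame is supplied, this is a proof plan rather than a proof.
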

For example, one can show that the CHM in \eqref{eq:rank3chm} does not belong to any MUB trio by Lemma \ref{le:mubtrio}.

\section{CHM of Schmidt rank three: MUB}
\label{sec:result}

In this section we characterize CHMs of Schmidt rank three. In Lemma \ref{le:sr3}, we investigate the cases when $V$ in \eqref{le:sr3} is a unitary matrix having exactly six and four zero entries, respectively. In Lemma \ref{le:v:1zero}, we investigate the cases when $V$ in \eqref{le:sr3} is a unitary matrix having exactly one zero entry. In Theorem \ref{thm:sr3} we present the main result of this section, namely the exclusion of CHM $\bbH_3$ with $V,W$ containing at least one zero entry.
We construct examples of $M$
and show that they do not belong to any MUB trio. It indicates that no Schmidt-rank-three CHM may belong to an MUB trio.

\begin{lemma}
\label{le:sr3}
	Let $\bbH_3$ be the Schmidt-rank-three order-six CHM in \eqref{eq:sr3}. We shall use the matrices $V,W$ and  parameters $\a_i,\b_i,\g_i$ in \eqref{eq:sr3}.
	
	(i) If $V$ is a monomial unitary matrix, then $
	\a_1=\a_2=\a_3=\p/4,~\b_1,\b_2,\b_3,\g_1,\g_2,\g_3\in[0,2\p)$, such that the matrix (\ref{eq:h4}) has rank three. Further
	$W=W_1D_1$ where
	$W_1=\frac{1}{\sqrt{3}}
	\begin{bmatrix}
	1 & 1 & 1\\
	1 & \omega & \omega^2\\
	1 & \omega^2 & \omega\\
	\end{bmatrix}$ or
	$\frac{1}{\sqrt{3}}
	\begin{bmatrix}
	1 & 1 & 1\\
	1 & \omega^2 & \omega\\
	1 & \omega & \omega^2\\
	\end{bmatrix}$, $D_1=\diag(1,e^{i\delta_2},e^{i\delta_3})$, and $\delta_2,\delta_3 \in [0,2\p)$.
	
	(ii) If $V$ is a unitary matrix with exactly four zero entries, then $V$ is equivalent to $1\oplus \frac{1}{\sqrt{2}}\bma
	1 & e^{i\t}\\
	1 & -e^{i\t}
	\ema$ where $\t=\pm\p/2$ when $\a_2\ne0,\p/2$. Next $
	\a_1=\p/4,~\a_2+\a_3=\p/2,$ $\a_2\in[0,\p/4)\cup(\p/4,\p/2]$, $~\b_1,\b_2,\b_3,\g_1,\g_2,\g_3\in[0,2\p)$, such that the matrix (\ref{eq:h4}) has rank three. Further
	$W=W_2D_2$, where $W_2=\frac{1}{\sqrt{3}}
	\begin{bmatrix}
	1 & 1 & 1\\
	1 & \omega & \omega^2\\
	1 & \omega^2 & \omega\\
	\end{bmatrix}$ or
	$\frac{1}{\sqrt{3}}
	\begin{bmatrix}
	1 & 1 & 1\\
	1 & \omega^2 & \omega\\
	1 & \omega & \omega^2\\
	\end{bmatrix}$, $D_2=\diag(1,e^{i\delta_2},e^{i\delta_3})$ and $\delta_2,\delta_3 \in [0,2\p)$.

\end{lemma}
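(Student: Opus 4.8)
\emph{Plan of proof.} The first move is to write $\bbH_3$ in $2\times2$ block form with respect to the qubit factor of $\bbC^2\ox\bbC^3$, each block of size $3\times3$. Denote by $M_0$ the $6\times6$ matrix in the middle of \eqref{eq:sr3} and set $D_c=\diag(\cos\a_j)$, $D_s=\diag(\sin\a_j)$, $E_\b=\diag(e^{i\b_j})$, $E_\g=\diag(e^{i\g_j})$; then $M_0=\bma D_c&D_sE_\g\\ D_sE_\b&-D_cE_\b E_\g\ema$, so the four $3\times3$ blocks of $\bbH_3=(I_2\ox V)M_0(I_2\ox W)$ are $VD_cW$, $VD_sE_\g W$, $VD_sE_\b W$ and $-VD_cE_\b E_\g W$. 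The hypothesis that $\bbH_3$ is a CHM says exactly that every entry of each of these four matrices has modulus $1/\sqrt6$. I will also use that one may replace $(V,M_0,W)$ by $(VP,\,(I_2\ox P^{-1})M_0(I_2\ox P),\,P^{-1}W)$ for a permutation $P$ on $\bbC^3$, which keeps \eqref{eq:sr3} of the same form (it only permutes the index triples $(\a_j,\b_j,\g_j)$, and being a permutation preserves the nonnegativity of the first column of $W$), and that left multiplying $V$ by a monomial matrix is an equivalence of $\bbH_3$; together these let me normalize the position of the zero pattern of $V$ and rescale its nonzero entries.

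\emph{Part (i).} If $V$ is monomial, then so are $VD_c$, $VD_sE_\b$, etc., whence $(VD_cW)_{ik}=(\text{unit modulus})\cdot\cos\a_{\sigma(i)}\,W_{\sigma(i),k}$ for the permutation $\sigma$ underlying $V$, and similarly for the other three blocks. Equating moduli to $1/\sqrt6$ forces $|\cos\a_j|\,|W_{jk}|=1/\sqrt6$ for all $j,k$ (top-left block) and $|\sin\a_j|\,|W_{jk}|=1/\sqrt6$ (bottom-left block). No $W_{jk}$ can vanish, so $|\cos\a_j|=|\sin\a_j|$, i.e.\ $\a_1=\a_2=\a_3=\p/4$, and then $|W_{jk}|=1/\sqrt3$ for all $j,k$: $\sqrt3\,W$ is an order-three CHM. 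By the normalization built into \eqref{eq:sr3}, its first column is $(1,1,1)^T$; orthogonality of the remaining two columns to $(1,1,1)^T$ and to each other forces each of them to be $a(1,\o,\o^2)^T$ or $a(1,\o^2,\o)^T$ with the two types occurring once each, which is precisely $\sqrt3\,W=W_1\diag(1,e^{i\d_2},e^{i\d_3})$ with $W_1$ one of the two displayed Fourier-type matrices. The rank-three property of \eqref{eq:h4} is inherited from Lemma~\ref{le:sr2} and $\b_j,\g_j$ are otherwise free, which is (i).

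\emph{Part (ii).} The first sub-step is structural: a $3\times3$ unitary with exactly four zero entries must have a single row and a single column that are zero except at their common (unit-modulus) entry, the complementary $2\times2$ submatrix $V'$ being a zero-free unitary, hence $|V'_{11}|=|V'_{22}|=\cos t$, $|V'_{12}|=|V'_{21}|=\sin t$ for some $t\in(0,\p/2)$; after the normalizations above I may take $V=1\op V'$ with $V'=\bma\cos t&\sin t\\ \sin t&-\cos t\ema\diag(1,e^{i\t})$. Reading off row $1$ of the four blocks gives $|\cos\a_1|\,|W_{1k}|=|\sin\a_1|\,|W_{1k}|=1/\sqrt6$, so $\a_1=\p/4$ and $|W_{1k}|=1/\sqrt3$. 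For rows $2,3$: these rows of each block equal $V'$ applied to a vector of the form $\diag(\cdot)(\cos\a_2 W_{2k},\cos\a_3 W_{3k})^T$ or $\diag(\cdot)(\sin\a_2 W_{2k},\sin\a_3 W_{3k})^T$, and because $V'$ is unitary and each output component has modulus $1/\sqrt6$, I get for every $k$ the norm identities $\cos^2\a_2|W_{2k}|^2+\cos^2\a_3|W_{3k}|^2=\tfrac13$ and $\sin^2\a_2|W_{2k}|^2+\sin^2\a_3|W_{3k}|^2=\tfrac13$. Subtracting, $\cos2\a_2|W_{2k}|^2+\cos2\a_3|W_{3k}|^2=0$ for all $k$. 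If $\cos2\a_2=0$ this forces $\a_2=\a_3=\p/4$, a case subsumed by the monomial-$V$ description (hence the restriction $\a_2\ne\p/4$); otherwise $|W_{3k}|^2$ is a fixed positive multiple of $|W_{2k}|^2$, and combined with $|W_{2k}|^2+|W_{3k}|^2=\tfrac23$ (the sum of the two identities, using $|W_{1k}|^2=\tfrac13$) this forces $|W_{2k}|^2=|W_{3k}|^2=\tfrac13$ and the multiple to be $1$, i.e.\ $\cos2\a_2=-\cos2\a_3$, so $\a_2+\a_3=\p/2$, and $\sqrt3\,W$ is again an order-three CHM, classified as in (i) to give $W=W_2\diag(1,e^{i\d_2},e^{i\d_3})$. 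Lastly, using $|W_{jk}|=1/\sqrt3$ and $\a_2+\a_3=\p/2$ I would expand the per-entry modulus conditions for rows $2,3$: writing $\r_k=W_{2k}\overline{W_{3k}}$, which by the classification of $\sqrt3\,W$ takes three values differing by $e^{\pm2\p i/3}$, these equations say that real parts of the form $\Re(\mu\r_k)$ are independent of $k$; since $\r_1+\r_2+\r_3=0$, those common values must vanish, which pins down $\cos2t=0$, i.e.\ $t=\p/4$ (the $1/\sqrt2$ entries of $V'$), and — in the range where the coefficient $\sin2\a_2$ does not vanish, i.e.\ $\a_2\notin\{0,\p/2\}$ — pins the residual phase $\t$; when $\a_2\in\{0,\p/2\}$ that coefficient vanishes and $\t$ is left free, matching the statement. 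The rank-three condition again carries over from Lemma~\ref{le:sr2}.

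\emph{Main obstacle.} The genuinely delicate part is the last step of (ii): one has to organize the eight families of per-column modulus equations (rows $2,3$ against the four blocks) into the few independent ones, keep track of how the three values of $\r_k=W_{2k}\overline{W_{3k}}$ — fixed by the classification of $\sqrt3\,W$ — interact with the phase $\t$ of $V'$ and with the differences $\b_2-\b_3$, $\g_2-\g_3$, and deal with the degenerate values $\a_2\in\{0,\p/4,\p/2\}$ separately. One must also apply the monomial/permutation freedom consistently so that the normalization ``first column of $W$ nonnegative real'' is never spoiled, and invoke the classification of order-three complex Hadamard matrices (every such matrix is equivalent to the Fourier matrix) in exactly the normalized form that produces the two matrices $W_1$ (resp.\ $W_2$) written in the statement.
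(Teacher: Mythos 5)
Your proposal follows essentially the same route as the paper's proof: block decomposition of $\bbH_3$, per-entry modulus conditions combined with unitarity of $W$ and of the $2\times2$ block of $V$, the sum-over-columns (equivalently, sum-over-$k$ with $\sum_k\r_k=0$) trick to extract the norm identities, and the classification of $\sqrt3\,W$ as an order-three CHM with first column $(1,1,1)^T$. Part (i) is fine (your derivation of $\a_j=\p/4$ from comparing the $D_c$ and $D_sE_\b$ blocks is slightly more direct than the paper's, which uses the row normalization of $W$ instead, but it is the same computation in substance), and the structural reduction of a four-zero unitary to $1\oplus V'$ and the derivation of $\a_1=\p/4$, $\cos2\a_2|W_{2k}|^2+\cos2\a_3|W_{3k}|^2=0$, $\a_2+\a_3=\p/2$ and the Fourier form of $W$ all match the paper's Appendix A.

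The one genuine gap is the exclusion of $\a_2=\p/4$ in part (ii). Saying that the case $\a_2=\a_3=\p/4$ is ``subsumed by the monomial-$V$ description'' does not prove the asserted necessary condition $\a_2\in[0,\p/4)\cup(\p/4,\p/2]$: the lemma is about the given $V$ with \emph{exactly four} zero entries, so you must show this case cannot occur for such a $V$. The paper does this by showing that $\a_2=\a_3=\p/4$ forces $g_1g_3=0$ in the $2\times2$ block, contradicting the assumption that $V$ is not monomial; the needed computation is exactly your $\r_k$ argument ($\Re(\mu\r_k)=0$ for three $\r_k$ spanning the plane forces $\mu=V'_{11}\overline{V'_{12}}\cos\a_2\cos\a_3=0$, hence $V'_{11}V'_{12}=0$ when $\cos\a_2\cos\a_3=\tfrac12$). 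Incidentally, that same observation shows your final step proves more than you claim: for $\a_2\notin\{0,\p/4,\p/2\}$ the vanishing of $\Re(\mu\r_k)$ for \emph{all three} $k$ does not merely ``pin the residual phase $\t$'' but forces $\mu=0$ and hence a contradiction with $V'$ being zero-free; the paper avoids this by using only the first column of $W$ at that stage, which yields exactly $\t=\pm\p/2$. Since the lemma states necessary conditions, your stronger conclusion is not an error, but you should either restrict to the $k=1$ condition (to match the stated conclusion) or note explicitly that the remaining conditions make the case $\a_2\notin\{0,\p/2\}$ vacuous.
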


The proof of this lemma is given in Appendix \ref{app:sr3}. Next we investigate the more complex case, namely $\bbH_3$ when $V$ in \eqref{eq:sr3} has exactly one zero entry.
	
\begin{lemma}
	\label{le:v:1zero}
	Let $\bbH_3$ be the Schmidt-rank-three order-six CHM in \eqref{eq:sr3}. If $V$ in \eqref{eq:sr3} is a unitary matrix with exactly one zero entry. Then $V$ is equivalent to $
	\bma
	v_{11} &v_{12} &0\\
	v_{21} &v_{22} &v_{23}\\
	v_{31} &v_{32} &v_{33}
	\ema$, where $v_{jk}\neq0$. Next $V$ is the product of two unitary matrices $p_1\oplus \frac{1}{\sqrt{2}}
	\bma
	1 &e^{i\theta}\\
	1 &-e^{i\theta}
	\ema$ and $\bma
	g_{21} &g_{23}\\
	g_{22} &g_{24}
	\ema\oplus p_2$, where $|g_{21}|^2=|g_{24}|^2=\frac{\cos2\a_2}{\cos2\a_2-\cos2\a_1}$. Third $\cos^2\a_1+\cos^2\a_2+\cos^2\a_3=\frac{3}{2},~\b_1,\b_2,\b_3,\g_1,\g_2,\g_3\in[0,2\p)$, such that the matrix (\ref{eq:h4}) has rank three. Further $W=W_3D_3$, where
	$W_3=
	\bma
	d_1 & e_1 & f_1\\
	d_2 & e_2 & f_2\\
	d_3 & e_3 & f_3
	\ema
	$, $d_1$, $d_2$ and $d_3$ are nonnegative and real numbers, and $D_3=\diag(1,e^{i\delta_2},e^{i\delta_3})$ is a diagonal unitary matrix, $\delta_2,\delta_3 \in [0,2\p)$.

\end{lemma}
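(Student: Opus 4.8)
The plan is a short normalization step, then a chain of norm identities exploiting the single zero of $V$, which reduces the lemma to one nontrivial flatness claim whose verification is the casework deferred to Appendix \ref{app:le:v:1zero}.

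Write $\bbH_3=(I_2\ox V)K(I_2\ox W)$, where $K$ is the middle matrix of \eqref{eq:sr3}, i.e.\ $K=\sum_{j=1}^3 u_j\ox\proj j$ with $u_j=\bma \cos\a_j & e^{i\g_j}\sin\a_j\\ e^{i\b_j}\sin\a_j & -e^{i(\b_j+\g_j)}\cos\a_j\ema$, and record the four order-three blocks $A=VD_cW$, $B=VD_{s\g}W$, $C=VD_{s\b}W$, $D=VD_{-c\b\g}W$ with $D_c=\diag(\cos\a_1,\cos\a_2,\cos\a_3)$, etc. Left-multiplying $V$ by a monomial is a CHM-equivalence, and right-multiplying $V$ by a diagonal or a permutation matrix can be compensated inside $K$ and $W$ (these commute with, or merely permute, the diagonal blocks of $K$); hence $V$ may be brought to a form whose unique zero is at $(1,3)$, and a unitary with a single zero entry has all others nonzero, giving the first assertion. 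The constraint on the $\a_i$ is then automatic: $\|A\|_F^2=\tr(W^\dg D_c^2W)=\tr D_c^2=\sum_j\cos^2\a_j$, while all nine entries of $A$ have modulus $1/\sqrt6$, so $\sum_j\cos^2\a_j=\tfrac32$; the rank-three condition on \eqref{eq:h4} is inherited from Lemma \ref{le:sr2}.

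Because $v_{13}=0$, the first row of $A$ is $A_{1k}=v_{11}\cos\a_1W_{1k}+v_{12}\cos\a_2W_{2k}$, and summing $|A_{1k}|^2=\tfrac16$ over $k$ (using orthonormality of rows $1,2$ of $W$) gives $|v_{11}|^2\cos^2\a_1+|v_{12}|^2\cos^2\a_2=\tfrac12$, which together with $|v_{11}|^2+|v_{12}|^2=1$ forces $|v_{11}|^2=\frac{\cos2\a_2}{\cos2\a_2-\cos2\a_1}$ (in particular $\a_1\neq\a_2$) -- exactly the claimed value of $|g_{21}|^2=|g_{24}|^2$. Still using $v_{13}=0$ and unitarity, rows $2$ and $3$ of $V$ are each orthogonal to $(v_{11},v_{12},0)$, hence both proportional to $(\overline{v_{12}},-\overline{v_{11}},0)$ in their first two entries (with factors $a,b$), and the column orthogonalities give $v_{33}=-(\overline a/\overline b)v_{23}$, $|a|^2+|b|^2=1$, $|v_{23}|^2=|b|^2$, $|v_{33}|^2=|a|^2$. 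Thus $V=(p_1\op H)(G\op p_2)$ holds automatically with $G$ the $2\times2$ block carrying $v_{11},v_{12}$ and $H$ some $2\times2$ unitary, and (after a left-diagonal normalization absorbed into the monomial freedom) $H=\tfrac1{\sqrt2}\bma 1 & e^{i\t}\\ 1 & -e^{i\t}\ema$ exactly when $H$ is flat. So the lemma reduces to showing $H$ is a $2\times2$ complex Hadamard matrix, i.e.\ $|a|=|b|$, equivalently $|v_{23}|^2=\tfrac12$. Comparing the row-two and row-three norms of $A$ and $B$ yields $(|a|^2-|b|^2)(|v_{12}|^2\cos^2\a_1+|v_{11}|^2\cos^2\a_2-\cos^2\a_3)=0$, so the remaining task is to exclude the branch where the second factor vanishes (which, with $\sum_j\cos^2\a_j=\tfrac32$, forces $\a_3=\pi/4$, $\a_1+\a_2=\pi/2$, $|v_{11}|=|v_{12}|=1/\sqrt2$).

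To do that one must use the moduli of individual entries, not just block-row norms. The plaquette of $\bbH_3$ on rows $\{1,4\}$ and columns $\{k,k+3\}$ is $M^{(k)}=v_{11}W_{1k}u_1+v_{12}W_{2k}u_2$, and a $2\times2$ matrix has all four entries of equal modulus iff its rows have equal norm, its columns have equal norm, and $|M_{11}|=|M_{12}|$; with $\lambda:=\overline{v_{11}}v_{12}$, $z_k:=\overline{W_{1k}}W_{2k}$ (so $\sum_k z_k=0$), $x:=\b_2-\b_1$, $y:=\g_2-\g_1$, $P:=\cos\a_1\cos\a_2(1-e^{i(x+y)})$, $Q:=\sin\a_1\sin\a_2(e^{ix}-e^{iy})$, the first two conditions become $\Re(\lambda Pz_k)=\Re(\lambda Qz_k)=0$ for $k=1,2,3$, and the plaquettes on rows $\{2,3,5,6\}$ add the analogous relations involving $u_3$, the $W_{3k}$ and the entries of $H$. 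The proof then splits into cases: which of $P,Q$ vanish (some $\a_i\in\{0,\pi/2\}$ or $\b_2-\b_1=\pm(\g_2-\g_1)$), whether the $z_k$ are collinear or some $z_k=0$, and further phase coincidences; in each branch one either recovers $|a|=|b|$ with $v_{23}=-v_{33}$, or reaches a contradiction with $V$ having exactly one zero or with \eqref{eq:h4} having rank three. This casework -- together with the bookkeeping of the residual freedom in $W$ -- is the main obstacle and is the content of Appendix \ref{app:le:v:1zero}. Finally, since the monomial moves may spoil the normalization of Lemma \ref{le:sr2}, rephasing $W$ on the right by $\diag(1,e^{i\d_2},e^{i\d_3})$ (which preserves the form of \eqref{eq:sr3}) restores a nonnegative real first column, giving $W=W_3D_3$ and completing the proof.
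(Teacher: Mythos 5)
Your normalization and norm bookkeeping reproduce, in a cleaner form, exactly what the paper does in Appendix~\ref{app:le:v:1zero}: moving the zero to position $(1,3)$ by monomial equivalence (absorbing the permutation into the middle factor and $W$), getting $\sum_j\cos^2\a_j=\tfrac32$ from $\norm{A}_F^2=\tr D_c^2$, getting $|v_{11}|^2\cos^2\a_1+|v_{12}|^2\cos^2\a_2=\tfrac12$ from the first block-row (the paper's Eq.~\eqref{eq:v11+v12}) and hence $|v_{11}|^2=\frac{\cos2\a_2}{\cos2\a_2-\cos2\a_1}$, and restoring the nonnegative first column of $W$ by a right diagonal. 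Your derivation of the factorization $V=(p_1\op H)(G\op p_2)$ from the orthogonality of rows $2,3$ of $V$ to $(v_{11},v_{12},0)$ is actually an improvement: the paper simply asserts this factorization in Eq.~\eqref{maV''} without argument.

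The genuine gap is the last claim of the lemma, that the left factor is $\frac1{\sqrt2}\bma 1 & e^{i\t}\\ 1 & -e^{i\t}\ema$, i.e.\ $|a|=|b|=\tfrac1{\sqrt2}$. Your block-row comparison correctly yields $s\,(|a|^2-\tfrac12)=s\,(|b|^2-\tfrac12)=0$ with $s=\cos2\a_1+\cos2\a_2$, so everything hinges on the branch $s=0$ (equivalently $\a_1+\a_2=\p/2$, $\a_3=\p/4$, $|v_{11}|=|v_{12}|=\tfrac1{\sqrt2}$), in which the row norms place \emph{no} constraint on $|a|,|b|$. You do not close this branch: you list the sub-cases of a plaquette/modulus analysis and then declare that the verification ``is the content of Appendix~\ref{app:le:v:1zero}'' --- but that appendix is precisely the proof you are meant to be supplying, so the proposal is circular on the one point that is not routine. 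For what it is worth, the paper is no better here: it passes from Eqs.~\eqref{eq:v21+v22+v23}--\eqref{eq:v31+v32+v33} and \eqref{mdl:g21}--\eqref{mdl:g22} to $|g_{11}|=\cdots=|g_{14}|=\tfrac1{\sqrt2}$ with a bare ``simplifying \dots we can obtain,'' which silently skips the same degenerate branch; your reduction at least makes that lacuna explicit. But identifying the branch is not excluding it: to have a proof you must actually run the entrywise-modulus casework (or find another argument) and show that $s=0$ with $|a|\ne|b|$ contradicts either the single-zero hypothesis on $V$ or the rank-three condition on \eqref{eq:h4}. Until then the stated form of the left factor is unproved.
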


The proof of this lemma is given in Appendix \ref{app:le:v:1zero}. Now we present the main result of this section.

\begin{theorem}
\label{thm:sr3}
Let $\bbH_3$ be the Schmidt-rank-three order-six CHM in \eqref{eq:sr3}.
	
(i) If $V$ is a monomial unitary matrix
then $\bbH_3$ is not a member of any MUB trio.

(ii) If $V$ is a unitary matrix with exactly four zero entries, then $\bbH_3$ is not a member of any MUB trio.
	
(iii) If $V$ is a unitary matrix with exactly one zero entry then $\bbH_3$ is not a member of any MUB trio. 	
\end{theorem}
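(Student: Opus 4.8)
The plan is to argue by contradiction in each of the three cases: assume $\bbH_3$ belongs to an MUB trio, feed in the normal forms of $V$, $W$ and the angles $\a_i$ supplied by Lemma \ref{le:sr3} (for (i),(ii)) and Lemma \ref{le:v:1zero} (for (iii)), and then locate inside $\bbH_3$ a submatrix that Lemma \ref{le:mubtrio} forbids — a $3\times3$ subunitary block, or a real $3\times2$ or $2\times3$ block. The starting observation is that, writing the middle matrix of \eqref{eq:sr3} in $2\times2$ block form with $3\times3$ diagonal blocks $\Lambda_1=\diag(\cos\a_1,\cos\a_2,\cos\a_3)$, $\Lambda_2=\diag(e^{i\g_j}\sin\a_j)$, $\Lambda_3=\diag(e^{i\b_j}\sin\a_j)$, $\Lambda_4=\diag(-e^{i(\b_j+\g_j)}\cos\a_j)$, one has $\bbH_3=\bma V\Lambda_1W & V\Lambda_2W\\ V\Lambda_3W & V\Lambda_4W\ema$, so each $V\Lambda_kW$ is literally one of the four $3\times3$ submatrices of $\bbH_3$ (rows $\{1,2,3\}$ or $\{4,5,6\}$, columns $\{1,2,3\}$ or $\{4,5,6\}$); moreover monomial transformations preserve membership in an MUB trio, so $V$ and $W$ may be taken already in the forms given by the lemmas.

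\textbf{Part (i).} By Lemma \ref{le:sr3}(i), $V$ monomial forces $\a_1=\a_2=\a_3=\p/4$, hence $\Lambda_1=\tfrac{1}{\sqrt2}I_3$ and the top-left block $V\Lambda_1W=\tfrac{1}{\sqrt2}VW$ is $\tfrac{1}{\sqrt2}$ times a unitary matrix, i.e.\ a $3\times3$ subunitary submatrix of $\bbH_3$; Lemma \ref{le:mubtrio} then excludes $\bbH_3$ from every MUB trio. (Alternatively, $\a_1=\a_2=\a_3$ already contradicts Lemma \ref{le:sr2}(i.a).)

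\textbf{Part (ii).} By Lemma \ref{le:sr3}(ii) one may take $V=1\oplus\tfrac{1}{\sqrt2}\bma 1 & e^{i\t}\\ 1 & -e^{i\t}\ema$, $\a_1=\p/4$, $\a_2+\a_3=\p/2$, and $W=W_2D_2$ with $W_2$ the order-$3$ Fourier-type matrix. Then $V\Lambda_1=V\diag(\tfrac{1}{\sqrt2},\cos\a_2,\sin\a_2)$ has second and third rows $\tfrac{1}{\sqrt2}(0,\cos\a_2,\pm e^{i\t}\sin\a_2)$, so the second row of the top-left block $V\Lambda_1W_2D_2$ equals $\tfrac{1}{\sqrt6}\bigl(\cos\a_2(1,\o,\o^2)+e^{i\t}\sin\a_2(1,\o^2,\o)\bigr)$ up to the column phases of $D_2$. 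The requirement that all entries of the CHM $\bbH_3$ have modulus $1/\sqrt6$ forces both $\sin2\a_2\cos\t=0$ (first entry) and $\sin2\a_2\cos(\t+\tfrac{2\p}{3})=0$ (second entry); since $\cos\t$ and $\cos(\t+\tfrac{2\p}{3})$ cannot both vanish, this gives $\sin2\a_2=0$, i.e.\ $\a_2\in\{0,\p/2\}$, which contradicts Lemma \ref{le:sr2}(i.a).

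\textbf{Part (iii).} This is the substantial case; the details go to Appendix \ref{app:thm 6 (iii)}. By Lemma \ref{le:v:1zero} one may take $V=\bigl(p_1\oplus\tfrac{1}{\sqrt2}\bma 1 & e^{i\t}\\ 1 & -e^{i\t}\ema\bigr)\bigl(\bma g_{21} & g_{23}\\ g_{22} & g_{24}\ema\oplus p_2\bigr)$ with $|g_{21}|^2=|g_{24}|^2=\tfrac{\cos2\a_2}{\cos2\a_2-\cos2\a_1}$, together with $\cos^2\a_1+\cos^2\a_2+\cos^2\a_3=\tfrac32$ and $W=W_3D_3$ where $W_3$ has a real nonnegative first column. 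First I would use $|g_{21}|^2\in[0,1]$ together with Lemma \ref{le:sr2}(i.a) to reduce, after a symmetric relabelling, to $\a_1\in(0,\p/4)$ and $\a_2\in(\p/4,\p/2)$, so in particular $\a_1\neq\a_2$ and $\a_1+\a_2\neq\p$. Then I would expand the entries of the four blocks $V\Lambda_kW$, impose the modulus-$1/\sqrt6$ condition, and split into subcases according to whether the phase combinations $\g_2+\b_2-\g_1-\b_1$, $\g_2-\g_1-(\b_2-\b_1)$, $\b_1-\b_2$, $\g_1-\g_2$ and the arguments of the $g_{ij}$ are $\equiv 0\pmod{\p}$. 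In the \emph{generic} branches the modulus equations collapse to $\a_1=\a_2$ or $\a_1+\a_2=\p$, contradicting the previous step; in the \emph{degenerate-phase} branches the special relations force one block $V\Lambda_kW$ to be proportional to a unitary, or produce a $3\times2$ or $2\times3$ submatrix that is real after extracting a common phase, and Lemma \ref{le:mubtrio} again yields the contradiction. The main obstacle is organizational — making the subcase tree exhaustive and verifying that every leaf lands in one of the two prohibited configurations — together with the trigonometric bookkeeping forced by the $g_{ij}$-constraint and by $\sum_j\cos^2\a_j=\tfrac32$.
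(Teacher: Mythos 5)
Your parts (i) and (ii) are sound, and they illustrate two different relationships to the paper's argument. Part (i) is essentially the paper's own proof: $\a_1=\a_2=\a_3=\p/4$ makes the upper-left block $\tfrac{1}{\sqrt2}VW$ a $3\times3$ subunitary submatrix, which Lemma \ref{le:mubtrio} forbids. Part (ii) takes a genuinely different route. The paper observes that $\a_1=\p/4$ together with the block form $V=p\oplus G$ makes rows $1$ and $4$ of the first three columns of $\bbH_3$ proportional, i.e.\ produces a rank-one $2\times3$ submatrix excluded by the full version of Lemma \ref{le:mubtrio} in \cite{Chen2017Product}; you instead push the modulus conditions on the second row of the top-left block to force $\sin2\a_2\cos\t=\sin2\a_2\cos(\t+\tfrac{2\p}{3})=0$, hence $\sin2\a_2=0$, and then invoke Lemma \ref{le:sr2}(i.a). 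Your computation is correct, and it is in fact sharper than Lemma \ref{le:sr3}(ii) as stated: the subcase $\t=\pm\p/2$ with $\a_2\ne0,\p/2$ admitted there is ruled out once one imposes the modulus condition on the second and third columns as you do. The paper's route has the advantage of not depending on that refinement and of using only $\a_1=\p/4$.

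The genuine gap is part (iii), which is the substance of the theorem. What you give there is a plan, not a proof: expand the blocks, impose the modulus conditions, branch on which phase combinations are multiples of $\p$, and claim that every branch ends either in $\a_1=\a_2$ (or $\a_1+\a_2=\p$) or in a submatrix forbidden by Lemma \ref{le:mubtrio}. The execution is the whole difficulty, and it does not close the way your sketch predicts. In Appendix \ref{app:thm 6 (iii)} the degenerate branches do \emph{not} terminate in a subunitary or real submatrix: they terminate in (a) a reduction to the four-zero-entry case already treated (when $e_3=f_3=0$, via passing to $\bbH_3^\dg$); (b) an explicit computation of $\cos\vartheta_1$ and $\cos^2\vartheta_2$ as functions of $\a_1,\a_2$, combined with the quantization $\b_j+\g_j\in(\bbZ+\tfrac12)\p$ forced by unitarity, showing that the only admissible values $(\cos(\vartheta_1+\vartheta_2),\cos\vartheta_2)\in\{(1,-1),(-1,1),(0,0)\}$ are unattainable on the admissible range of $\a_1,\a_2$; and (c), in the branch $\b_1=\b_2$, $\g_1=\g_2$, a proof that $d_1=d_2=d_3=1/\sqrt3$ and $\vartheta_2=\pm\p/2$, followed by showing that no admissible column $(e_1,e_2,e_3)$ of $W_3$ can satisfy both the modulus equations and the orthogonality relation $d_1e_1+d_2e_2+d_3e_3=0$. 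None of these endgames appears in your outline, your proposed dichotomy of "generic" versus "degenerate-phase" branches is not shown to be exhaustive, and you yourself flag the case tree as an unresolved obstacle. As it stands, part (iii) is unproved in your proposal.
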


\begin{proof}
(i) Since $V$ is a monomial unitary matrix, Lemma \ref{le:sr3} (i) shows that $\bbH_3$ has a $3\times3$ subunitary matrix. An example is the upper-left $3\times3$ submatrix of $\bbH_3$. This is the matrix $Y_3$ in Lemma \ref{le:mubtrio}. So assertion (i) holds.
	
(ii) Since $V=p\oplus G$ is a unitary matrix, Lemma \ref{le:sr3} (ii) shows that $\bbH_3$ has a $2\times3$ matrix of rank one. An example is the submatrix in the first and fourth rows, and the first three columns of $\bbH_3$.
This is the matrix $Y_2$ in Lemma \ref{le:mubtrio}. So assertion (ii) holds.	

(iii) The proof is based on Lemma \ref{le:v:1zero} and given in Appendix \ref{app:thm 6 (iii)}.
\end{proof}

By Theorem \ref{thm:sr3}, we conclude that the Schmidt-rank-three CHM $\bbH_3$ in \eqref{eq:sr3} in an MUB trio satisfies that the $3\times3$ unitary matrix $V$ in \eqref{eq:sr3} has no zero entry. Since $\bbH_3^\dg$ also belongs to an MUB trio, Theorem \ref{thm:sr3} shows that the $3\times3$ unitary matrix $W$ in \eqref{eq:sr3} has no zero entry too.
This fact shows that the CHM in the proof of Lemma \ref{le:lineardependence=sr3} (i) is excluded as a member of an MUB trio. Furthermore, the CHMs in \eqref{eq:rank3chm} and Lemma \ref{le:lineardependence=sr3} (ii) are excluded by Lemma \ref{le:mubtrio} and its full version in \cite{Chen2017Product}.
The above facts and Theorem \ref{thm:sr3} indicate that the MUB trio may not contain any CHM of Schmidt rank three.

On the other hand, the idea of constructing CHMs $M$ using the four blocks in Figure \ref{fig:sr} has been introduced by studying a four-parameter family of CHMs in \cite{Sz12}. It firstly determines the block $C$ of $M$, then finds out finitely many blocks $D$ and $E$, and finally determines whether the block $F$ exists. It is operational that we may choose suitable $D,E,F$ such that $M$ has Schmidt rank three. Furthermore, Ref. \cite{Sz12} constructs a four-parameter family of CHMs, and conjectures that it may be the full characterization of all CHMs. So the construction assisted by computer may bring about all CHMs of Schmidt rank three, especially $\bbH_3$ with $V,W$ having no zero entries.

\section{CHM of Schmidt rank three: entangling power}
\label{sec:ep}

In this section, we regard the six-dimensional CHM $\bbH_3$ in \eqref{eq:sr3} as a bipartite unitary operation on $\bbC^2\otimes\bbC^3$. It is known that $\bbH_3$ is a controlled unitary operation controlled from system $B$ \cite{cy14ap,cy14}, see Figure \ref{fig:chm=controlled}. We evaluate the entangling power of $\bbH_3$. Since the entangling power is invariant under local unitary transformation, we obtain that the entangling power of $\bbH_3$ is the same as that of
\begin{eqnarray}
\label{eq:uab}
U_{AB}=\sum^3_{k=1}U_k\otimes\proj{k},
\end{eqnarray}
where $U_k=\bma
\cos\a_k & e^{i\g_k}\sin\a_k \\
e^{i\b_k}\sin\a_k & -e^{i(\b_k+\g_k)}\cos\a_k \\
\ema $, and the real parameters $\a_j,\b_j,\g_j$ satisfy Eq. \eqref{eq:h4} and
$
\cos^2\a_1
+\cos^2\a_2
+\cos^2\a_3={3\over2}.
$
This equation is from the fact that $\bbH_3$ has entries of modulus $1/\sqrt6$.

Suppose $U_{AB}$ acts on the the input state, which is a bipartite product state $\ket{\d}_{Aa}\otimes\ket{\e}_{Bb}\in(\bbC^2\otimes\bbC^m)\otimes(\bbC^3\otimes\bbC^n)$, and $a,b$ are the ancilla systems. Up to local unitary transformation on system $a,b$ we may assume that
\begin{eqnarray}
\label{eq:alphaAa}
\ket{\d}_{Aa}=&&
(c_1\ket{1,1}+c_2\ket{1,2}+c_3\ket{2,2})_{Aa},
\\	
\label{eq:betaBa}
\ket{\e}_{Bb}=&&
(d_1\ket{1,x_1}+d_2\ket{2,x_2}+d_3\ket{3,x_3})_{Bb},
\end{eqnarray}
where
$
c_1,c_2,d_1,d_2,d_3\ge0,
c_1^2+c_2^2+\abs{c_3}^2=d_1^2+d_2^2+d_3^2=1.
$
The output state is the bipartite entangled state
\begin{eqnarray}
\label{eq:ps}	
\ket{\ps}_{Aa:Bb}=U_{AB}(\ket{\d}_{Aa}\otimes\ket{\e}_{Bb}).
\end{eqnarray}
By definition, the entangling power of $U_{AB}$ is the maximum entanglement of state $\ket{\ps}$ over all $\ket{\d}\otimes\ket{\e}$. It follows from \eqref{eq:uab} that $\ket{\ps}$ has Schmidt rank at most three. So it has the maximum entanglement $\log_2 3$ ebits. Using \eqref{eq:uab}-\eqref{eq:ps}, the maximum entanglement is achievable if and only if there exists $\ket{\d}_{Aa}$ such that the three states $(U_j)_A\ket{\d}_{Aa}$, $j=1,2,3$ are pairwise orthogonal. That is,
\begin{eqnarray}
&&
\bra{\d}_{Aa}(U_2)_A^\dg (U_1)_A	
\ket{\d}_{Aa}
\notag\\
=&&\bra{\d}_{Aa}(U_3)_A^\dg (U_2)_A	
\ket{\d}_{Aa}
\notag\\
= &&\bra{\d}_{Aa}(U_1)_A^\dg (U_3)_A	
\ket{\d}_{Aa}=0.
\label{eq:ujpo}
\end{eqnarray}
Using \eqref{eq:uab}-\eqref{eq:ujpo}, we can determine whether $U_{AB}$ has the maximum entanglement. A concrete example reaching the maximum is constructed as follows.
\begin{example}
\label{ex:maxep=log23}	
Let the input state be $\ket{\d}_{Aa}=\frac{\ket{11}+\ket{22}}{\sqrt{2}}$ by choosing $c_1=c_3=\frac{1}{\sqrt{2}},c_2=0$ in \eqref{eq:alphaAa}. Then \eqref{eq:ujpo} is equivalent to the statement that $\bbH_4 \bbH_4^\dg$ is diagonal, where $\bbH_4$ is from \eqref{eq:h4}. Let $U_{AB}$ in \eqref{eq:uab} satisfy $\b_1+\g_1=\b_2+\g_2=\b_3+\g_3+\p$, $\cos^2 \a_1+\cos^2\a_2=\frac{1}{2}$. By choosing $\a_1=\a_2=\p/3$ and $\cos(\b_1-\b_2)=-\frac{1}{3}$, we have
\begin{eqnarray}
&&
(\a_1,\a_2,\a_3,\b_1,\b_2,\b_3,\g_1,\g_2,\g_3)
\notag
\\=&&
		(\p/3,\p/3,0,\b_1,\b_1-\arccos(-\frac{1}{3}),\b_3,	
\notag\\&&
\g_1,\g_1+\arccos(-\frac{1}{3}),\b_1+\g_1-\b_3-\p).
\end{eqnarray}
By choosing good $\b_1$ and $\g_1$ we can obtain that $\bbH_4$ in \eqref{eq:h4} has rank three. For example, $\b_1=\g_1=0$ or $\b_1=\g_1=\p/2$. To conclude, we have shown that $U_{AB}$ corresponds to the six-dimensional CHM $\bbH_3$ in \eqref{eq:sr3}. Further, $U_{AB}$ is a Schmidt-rank-three bipartite unitary operation of maximum entangling power $\log_2 3$ ebits. Nevertheless, Lemma \ref{le:sr2} (i.a) and Theorem \ref{thm:sr3} show that such an $\bbH_3$ is not a member of any MUB trio.
\qed
\end{example}

In the remaining of this section, we investigate the entangling power of $\ket{\ps}$. Using \eqref{eq:ps}, one can obtain that the reduced density operator of system $Aa$ is the two-qubit state
\begin{eqnarray}
\label{eq:raa}
\r_{Aa}=&&
d_1^2 (U_1)_A \proj{\d}_{Aa} (U_1)_A^\dg	\notag\\
&&+
d_2^2 (U_2)_A \proj{\d}_{Aa} (U_2)_A^\dg	\notag\\
&&+
d_3^2 (U_3)_A \proj{\d}_{Aa} (U_3)_A^\dg	.
\end{eqnarray}
Evidently $\r_{Aa}$ has a zero eigenvalue. Suppose the three remaining eigenvalues are $\lambda_1,\lambda_2$ and $\lambda_3$. Then the entanglement of $\ket{\ps}$ is $S(\r_{Aa})=-\sum^3_{j=1}\lambda_j\log_2\lambda_j$, where $S(\r)$ is the von Neumann entropy of a quantum state $\r$. So the entangling power of $U_{AB}$ is
\begin{eqnarray}
\label{eq:max}
\max_{c_1,c_2,d_1,d_2,d_3\ge0,
\atop
c_1^2+c_2^2+\abs{c_3}^2=d_1^2+d_2^2+d_3^2=1}	
-\sum^3_{j=1}\lambda_j\log_2\lambda_j.
\label{eq:maxEP}
\end{eqnarray}
The analytical derivation of this maximum is out of reach yet, and we shall investigate its lower bound. We still use the parameters in Example \ref{ex:maxep=log23}, except that we replace $\a_1=\a_2=\p/3$ by $\p/3+x$ with $x \in [-\p/6,0]$. Correspondingly we replace $U_{AB}$ in Example \ref{ex:maxep=log23} by $U_{AB}(x)$. Note that the function $-\sum^3_{j=1}\lambda_j\log_2\lambda_j$ is a lower bound of the entangling power of $U_{AB}$ in \eqref{eq:max}.
 By using \eqref{eq:raa} we describe how the function changes with $\b_1,\b_3$ and $d_1,d_2,d_3$ in Figure \ref{fig:b3}-\ref{fig:di}. They imply that $-\sum^3_{j=1}\lambda_j\log_2\lambda_j$ has the maximum $\log_2 3$ only if $x=0$. Note that $U_{AB}(0)$ is exactly the bipartite unitary operation in Example \ref{ex:maxep=log23}. So $U_{AB}(0)$ reaches the maximum entangling power and does not belong to any MUB trio. The above pictures shows that the lower bound of entangling power of $U_{AB}(x)$ is smaller than that of $U_{AB}(0)$ as $x<0$. At the same time, we have not excluded the possibility that $U_{AB}(x)$ may belong to some MUB trio when $x<0$. It implies that the entangling power of a CHM may not reach the maximum, if it belongs to some MUB trio. It shows the potential connection between the existence of four MUBs and entangling power.
\begin{figure}[H]
	\centering
	\includegraphics[width=0.50\textwidth]{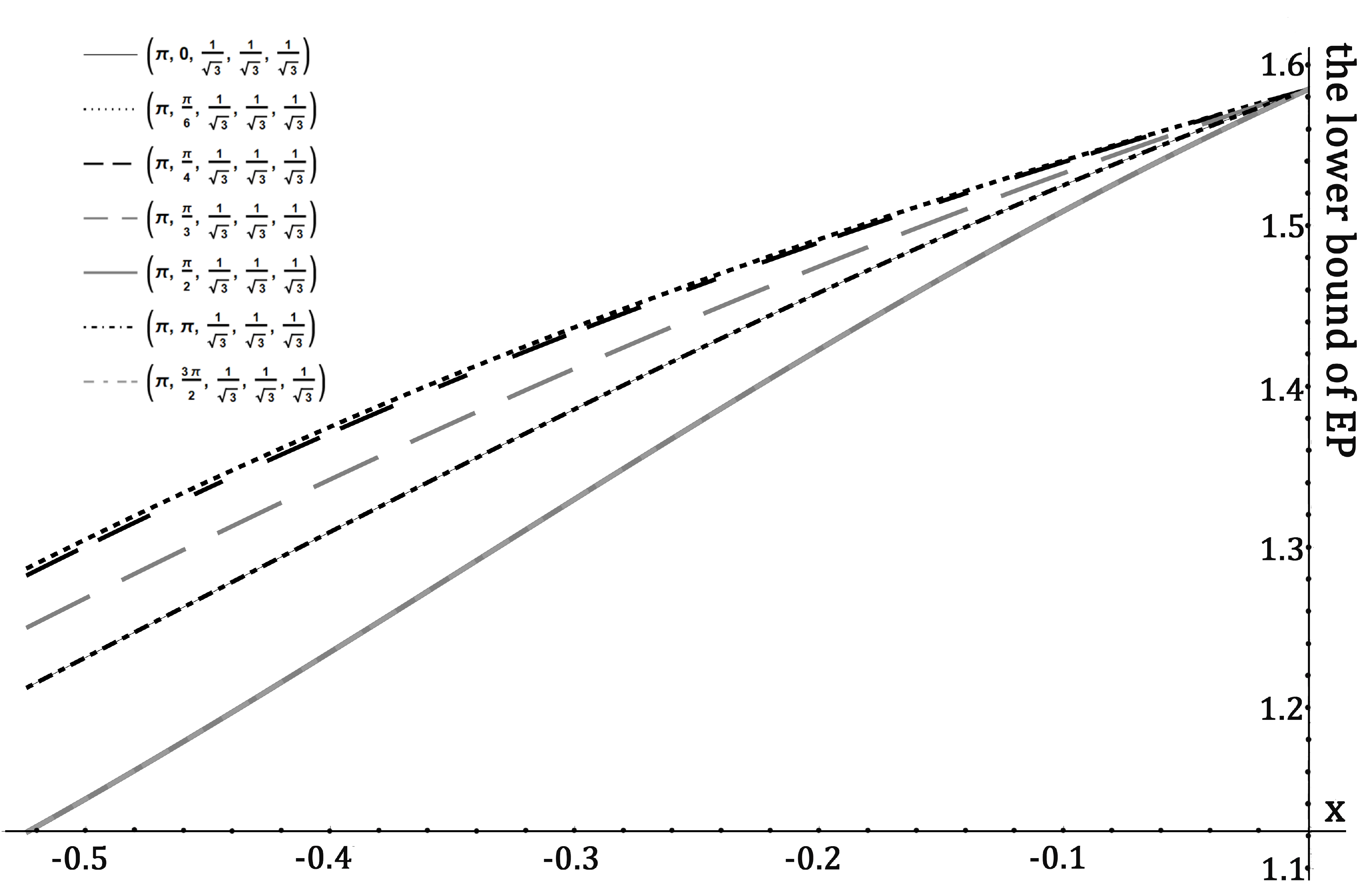}
	\caption{Let $(\b_1,\b_3,d_1,d_2,d_3)=(\p,\b_3,\frac{1}{\sqrt{3}},\frac{1}{\sqrt{3}},\frac{1}{\sqrt{3}})$, and $\b_3=0,\p/6,\p/4,\p/3,\p/2,\p,3\p/2$, respectively. The curves show that the function $-\sum^3_{j=1}\lambda_j\log_2\lambda_j$ in \eqref{eq:max} increases monotonically with $x\in[-\p/6,0]$. The curves $\b_3=0$ and $\b_3=\p$ coincide, and the curves of $\b_3=\p/2$ and $\b_3=3\p/2$ also coincide. The curve $\b_3$=$\p/6$ is above the curve $\b_3=\p/4$, and the curve $\b_3$=$\p/4$ is above the curve $\b_3=\p/3$.}
	\label{fig:b3}
\end{figure}

\begin{figure}[H]
	\centering
	\includegraphics[width=0.50\textwidth]{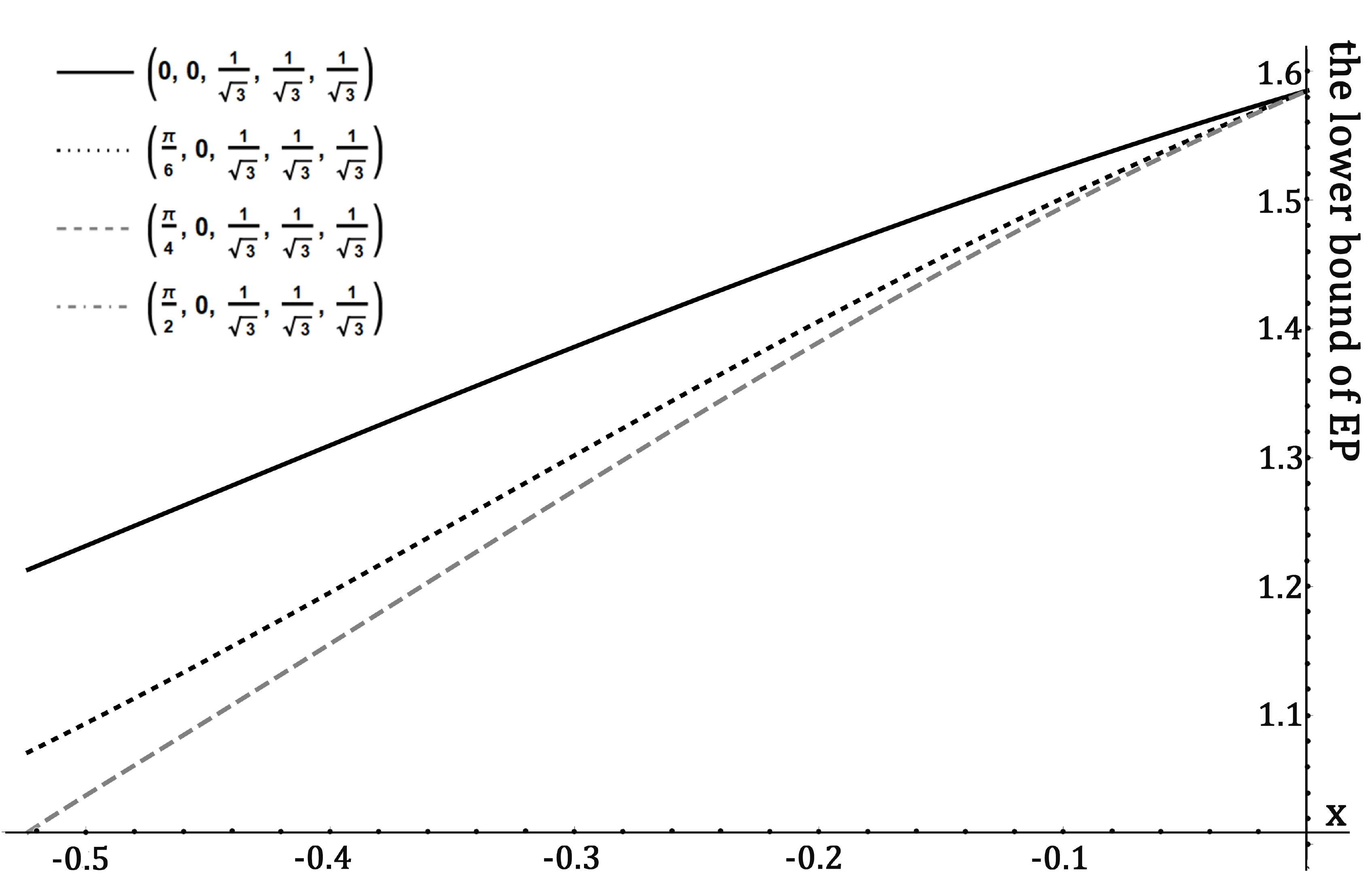}
	\caption{Let $(\b_1,\b_3,d_1,d_2,d_3)=(\b_1,0,\frac{1}{\sqrt{3}},\frac{1}{\sqrt{3}},\frac{1}{\sqrt{3}})$, and $\b_1=0,\p/6,\p/4,\p/2$, respectively.  The curves show that $-\sum^3_{j=1}\lambda_j\log_2\lambda_j$ increases monotonically with $x\in[-\p/6,0]$. The curves $\b_1=0$ and $\b_1=\p/2$ coincide. The curve  $\b_1$=$0$ is above the curve $\p/6$, and the curve $\b_1$=$0$ is above the curve $\b_1=\p/6$.  }
	\label{fig:b1}
\end{figure}

\begin{figure}[H]
	\centering
	\includegraphics[width=0.50\textwidth]{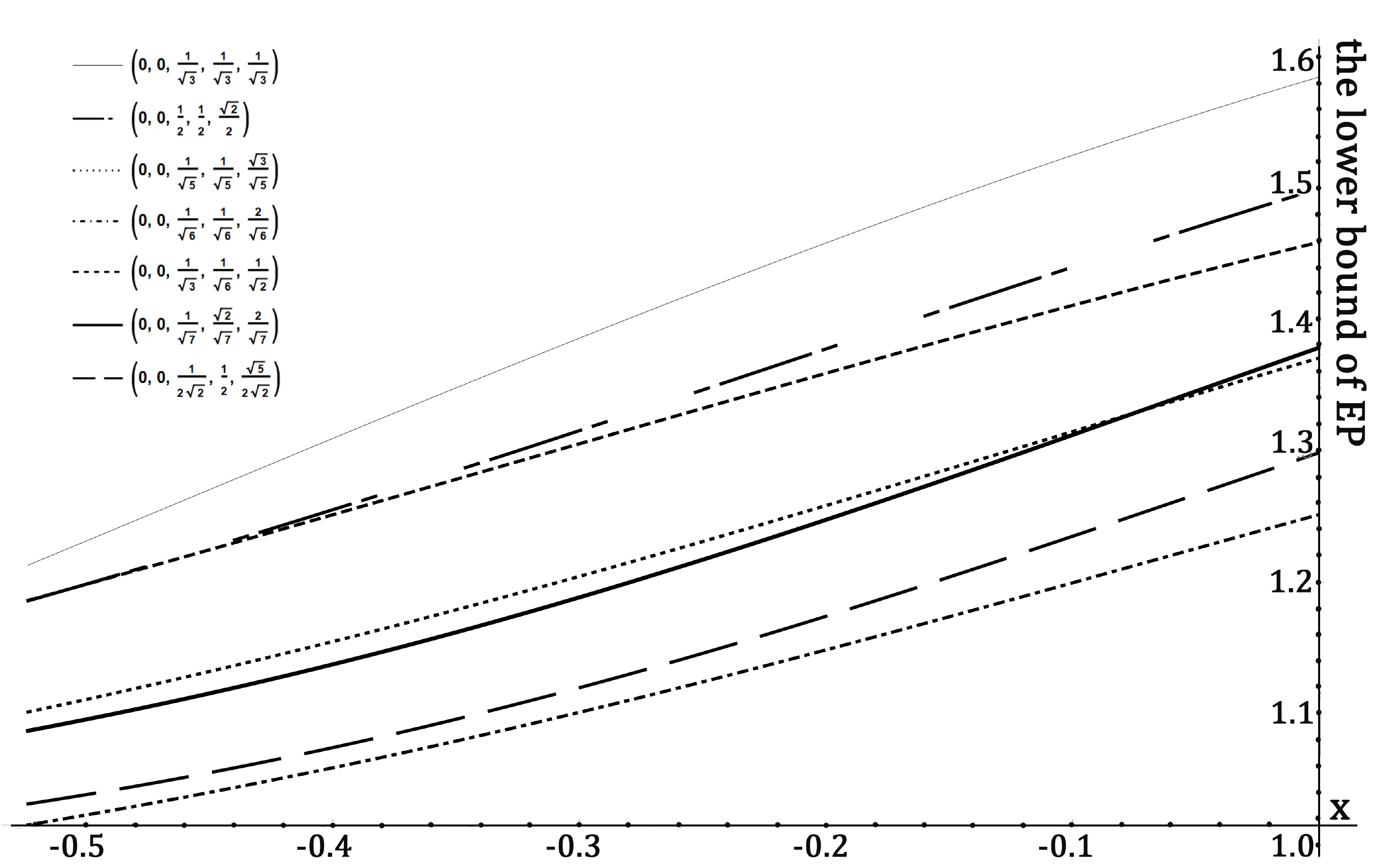}
	\caption{Let $(\b_1,\b_3,d_1,d_2,d_3)=(0,0,d_1,d_2,d_3)$, and $(d_1,d_2,d_3)=(\frac{1}{\sqrt{3}},\frac{1}{\sqrt{3}},\frac{1}{\sqrt{3}})$, $(\frac{1}{2},\frac{1}{2},\frac{\sqrt{2}}{2})$, $(\frac{1}{\sqrt{5}},\frac{1}{\sqrt{5}},\frac{\sqrt{3}}{\sqrt{5}})$, $(\frac{1}{\sqrt{6}},\frac{1}{\sqrt{6}},\frac{2}{\sqrt{6}})$, $(\frac{1}{\sqrt{3}},\frac{1}{\sqrt{6}},\frac{1}{\sqrt{2}})$, $(\frac{1}{\sqrt{7}},\frac{\sqrt{2}}{\sqrt{7}},\frac{2}{\sqrt{7}})$, $(\frac{1}{2\sqrt{2}},\frac{1}{2},\frac{\sqrt{5}}{2\sqrt{2}})$, respectively. The curves show that $-\sum^3_{j=1}\lambda_j\log_2\lambda_j$ increases monotonically with $x\in[-\p/6,0]$. And  $-\sum^3_{j=1}\lambda_j\log_2\lambda_j$ has the maximum $\log_23$ when $x=0$ and $(d_1,d_2,d_3)=(\frac{1}{\sqrt{3}},\frac{1}{\sqrt{3}},\frac{1}{\sqrt{3}})$.}
	\label{fig:di}
\end{figure}

\section{Conclusions}
	\label{sec:con}

The existence of four six dimensional MUBs consisting of an identity matrix and three CHMs has been a fundamental problem for decades. We have excluded a large subset of CHMs of Schmidt rank three from the four MUBs, and apply it to exclude examples constructed in this paper. It imposes a strict constraint on the existence of four six-dimensional MUBs, and this is supported by numerical tests.
We also have constructed the condition by which the entangling power of CHM as a bipartite controlled unitary operation is achieved. Our results indicate the conjecture that if a CHM of Schmidt rank three belongs to an MUB then its entangling power may not reach the maximum. The next target is to prove this conjecture, and analytically exclude any CHM of Schmidt rank three as a member of four MUBs including the identity matrix.

	\section*{Acknowledgments}
	\label{sec:ack}	
Authors were supported by the  NNSF of China (Grant No. 11871089), and the Fundamental Research Funds for the Central Universities (Grant Nos. KG12080401 and ZG216S1902).

\appendix

\section{Proof of Lemma \ref{le:sr3}}
\label{app:sr3}
	
\begin{proof}
	(i) Since $V$ is a monomial unitary matrix, there exists some unitary $D$ such that $V'=D V$ is a diagonal unitary matrix. So $V'$ is equivalent to $V$. Suppose
	$V'=
	\bma
	a & 0 & 0\\
	0 & b & 0\\
	0 & 0 & c
	\ema
	$,where $a$, $b$ and $c$ have modulus one. Suppose $W=W_1D_1$, where
	$W_1=
	\bma
	d_1 & e_1 & f_1\\
	d_2 & e_2 & f_2\\
	d_3 & e_3 & f_3
	\ema
	$, $d_1$, $d_2$ and $d_3$ are nonnegative and real numbers, and $D_1=\diag(1,e^{i\delta_2},e^{i\delta_3})$ is a diagonal unitary matrix, $\delta_2,\delta_3 \in [0,2\p)$. Then (\ref{eq:sr3}) is equivalent to
	\begin{widetext}
	\bea
	&&
	\bbH_3=(I_2 \ox V')(I_2 \ox P_1)\cdot
	\notag\\&&
	\left(
	\begin{array}{cccccc}
		\cos\a_1 &  0 & 0 & e^{i\g_1}\sin\a_1 & 0 & 0 \\
		0 &  \cos\a_2 &  0 & 0 & e^{i\g_2}\sin\a_2 & 0 \\
		0 & 0 &  \cos\a_3 &  0 & 0 & e^{i\g_3}\sin\a_3 \\
		e^{i\b_1}\sin\a_1 &  0 & 0 & -e^{i(\b_1+\g_1)}\cos\a_1 & 0 & 0 \\
		0 &  e^{i\b_2}\sin\a_2 &  0 & 0 & -e^{i(\b_2+\g_2)}\cos\a_2 & 0 \\
		0 & 0 &  e^{i\b_3}\sin\a_3 &  0 & 0 & -e^{i(\b_3+\g_3)}\cos\a_3 \\
	\end{array}
	\right)
	\notag\\&&
	\cdot
	(I_2 \ox W), \label{h3:1+1+1}
	\eea
	\end{widetext}
where $P_1$ is an arbitrary $3\times 3$ permutation matrix.

	In (\ref{h3:1+1+1}), let $\bbH_3=[h_{jk}]$, and $|h_{jk}|=\frac{1}{\sqrt{6}}$.
	Then we can obtain
	\begin{eqnarray}
		&&
		|ad_1\cos\a_1|=|ae_1\cos\a_1|=|af_1\cos\a_1|=\frac{1}{\sqrt{6}}, \notag\\
		\\&&
		|bd_2\cos\a_2|=|be_2\cos\a_2|=|bf_2\cos\a_2|=\frac{1}{\sqrt{6}},
		\\&&
		|cd_3\cos\a_3|=|ce_3\cos\a_3|=|cf_3\cos\a_3|=\frac{1}{\sqrt{6}}.	
	\end{eqnarray}
	So we have
	\begin{eqnarray}
		&&
		\cos\a_i=\frac{1}{\sqrt{6}d_i},
		\\&&
		|e_i|=|f_i|=d_i,\quad i=1,2,3.
	\end{eqnarray}
	By using $W_1W_1^{\dagger}=W_1^{\dagger}W_1=I$, we know
	\begin{eqnarray}
	   d_id_j+e_ie_j^{\ast}+f_if_j^{\ast}=\left\{
	   \begin{aligned}
	   &1 \quad \quad&i=j,\\
	   &0 \quad \quad&i\neq j.\\
	   \end{aligned}
	   \right.
	\end{eqnarray}
	They imply
	\begin{eqnarray}
     	&&
		|e_i|=|f_i|=d_i=\frac{1}{\sqrt{3}},
		\\&&
		a_i=\p/4,~\b_i,~\g_i\in[0,2\p),
		\\&&
		e_i,f_i\in{\dfrac{1}{\sqrt{3}}}\{1,\omega,\omega^2\},
		\\&&
		e_i\neq e_j, ~f_i\neq f_j.
	\end{eqnarray}		
	So we have 	$W_1=\frac{1}{\sqrt{3}}
	\begin{bmatrix}
	1 & 1 & 1\\
	1 & \omega & \omega^2\\
	1 & \omega^2 & \omega\\
	\end{bmatrix}$ or
	$\frac{1}{\sqrt{3}}
	\begin{bmatrix}
	1 & 1 & 1\\
	1 & \omega^2 & \omega\\
	1 & \omega & \omega^2\\
	\end{bmatrix}$.

	(ii) If $V$ is a unitary matrix which has four zero entries, then there exist permutation matrices $P_2, P_3$ such that $P_2VP_3=p\oplus G$, where  $G=
	\bma
	g_1 & g_3\\
	g_2 & g_4
	\ema$ is a non-monomial unitary matrix. Since $I_2 \ox P_3$ only changes its right matrix by row permutations, (\ref{eq:sr3}) is equivalent
	\begin{widetext}
	\bea
	&&
	\bbH_3=
	(I_2 \ox (p\oplus G))\cdot
	\notag\\&&
	\left(
	\begin{array}{cccccc}
		\cos\a_1 &  0 & 0 & e^{i\g_1}\sin\a_1 & 0 & 0 \\
		0 &  \cos\a_2 &  0 & 0 & e^{i\g_2}\sin\a_2 & 0 \\
		0 & 0 &  \cos\a_3 &  0 & 0 & e^{i\g_3}\sin\a_3 \\
		e^{i\b_1}\sin\a_1 &  0 & 0 & -e^{i(\b_1+\g_1)}\cos\a_1 & 0 & 0 \\
		0 &  e^{i\b_2}\sin\a_2 &  0 & 0 & -e^{i(\b_2+\g_2)}\cos\a_2 & 0 \\
		0 & 0 &  e^{i\b_3}\sin\a_3 &  0 & 0 & -e^{i(\b_3+\g_3)}\cos\a_3 \\
	\end{array}
	\right)
	\notag\\&&
	\cdot
	(I_2 \ox W),
	\label{h3:1+2}
	\eea
	\end{widetext}
up to the switching of $\a_j,\b_j$ and $\g_j$ and entries of $W$. Suppose $W=W_2D_2$, where
	$W_2=
	\bma
	d_1 & e_1 & f_1\\
	d_2 & e_2 & f_2\\
	d_3 & e_3 & f_3
	\ema
	$, $d_1$, $d_2$ and $d_3$ are nonnegative and real numbers, and $D_2=\diag(1,e^{i\delta_2},e^{i\delta_3})$ is a diagonal unitary matrix, $\delta_2,\delta_3 \in [0,2\p)$.
	
	Since $G$ is unitary we have $GG^\dagger=G^\dagger G=I$. Equivalently
	\begin{eqnarray}
		&&
		|g_2|^2=|g_3|^2=1-|g_1|^2=1-|g_4|^2, \label{eq:unit1}
		\\&&
		g_1g_2^\ast+g_3g_4^\ast=g_1^\ast g_3+g_2^\ast g_4=0. \label{eq:unit2}
	\end{eqnarray}
		
	By using $W_2W_2^{\dagger}=W_2^{\dagger}W_2=I$, we know
	\begin{eqnarray}
	d_id_j+e_ie_j^{\ast}+f_if_j^{\ast}=\left\{
	\begin{aligned}
	&1 \quad \quad&i=j,\\
	&0 \quad \quad&i\neq j.\\
	\end{aligned} \label{eq:unitW'}
	\right.
	\end{eqnarray}
	
	In (\ref{h3:1+2}), let $\bbH_3=[h_{jk}]$, and $|h_{jk}|=\frac{1}{\sqrt{6}}$.
	Then we can obtain
	\begin{eqnarray}
	    &&
		|pd_1\cos\a_1|=|pe_1\cos\a_1|=|pf_1\cos\a_1|=\dfrac{1}{\sqrt{6}}, \label{eq:mdl1}
		\\&&
		|pd_1e^{i\b_1}\sin\a_1|=|pe_1e^{i\b_1}\sin\a_1|=|pf_1e^{i\b_1}\sin\a_1|=\dfrac{1}{\sqrt{6}}, \label{eq:mdl2} \notag \\
		\\&&
		|pd_1e^{i\g_1}\sin\a_1|=|pe_1e^{i\g_1}\sin\a_1|=|pf_1e^{i\g_1}\sin\a_1|=\dfrac{1}{\sqrt{6}}, \label{eq:mdl-g} \notag \\
		\\&&
		|g_1d_2\cos\a_2+g_3d_3\cos\a_3|=|g_2d_2\cos\a_2+g_4d_3\cos\a_3|=\dfrac{1}{\sqrt{6}}, \label{eq:mdl-gd} \notag \\
		\\&&
		|g_1e_2\cos\a_2+g_3e_3\cos\a_3|=|g_2e_2\cos\a_2+g_4e_3\cos\a_3|=\dfrac{1}{\sqrt{6}}, \label{eq:mdl-ge} \notag \\
		\\&&
		|g_1f_2\cos\a_2+g_3f_3\cos\a_3|=|g_2f_2\cos\a_2+g_4f_3\cos\a_3|=\dfrac{1}{\sqrt{6}}. \label{eq:mdl-gf} \notag \\
	\end{eqnarray}	
	Simplifying (\ref{eq:mdl1}), (\ref{eq:mdl2}) and (\ref{eq:mdl-g}), we have
	\begin{eqnarray}
	    &&
		\a_1=\p/4, ~\b_i,~\g_i\in[0,2\p),
		\\&&
		d_1=\frac{1}{\sqrt{3}}, \label{eq:d1}
		\\&&
		|e_1|=|f_1|=\frac{1}{\sqrt{3}}.
	\end{eqnarray}	
	Using (\ref{eq:unit1}), (\ref{eq:unit2}) and (\ref{eq:mdl-gd}) we know
	\begin{eqnarray}
		&&
	    d_2^2\cos^2\a_2+d_3^2\cos^2\a_3=\dfrac{1}{3}, \label{eq:cosd}
	    \\&&
	    d_2^2\sin^2\a_2+d_3^2\sin^2\a_3=\dfrac{1}{3}.
	\end{eqnarray}
	They imply that
	\begin{eqnarray}
		&&
		d_2^2\cos2\a_2+d_3^2\cos2\a_3=0.	\label{eq:d2+d3}
	\end{eqnarray}
	Similarly, one can obtain
	\begin{eqnarray}
		&&
		|e_2|^2\cos^2\a_2+|e_3|^2\cos^2\a_3=\dfrac{1}{3}, \label{eq:cose}
		\\&&
		|e_2|^2\sin^2\a_2+|e_3|^2\sin^2\a_3=\dfrac{1}{3},
		\\&&
		|f_2|^2\cos^2\a_2+|f_3|^2\cos^2\a_3=\dfrac{1}{3}, \label{eq:cosf}
		\\&&
		|f_2|^2\sin^2\a_2+|f_3|^2\sin^2\a_3=\dfrac{1}{3}.
	\end{eqnarray}
	From (\ref{eq:cosd}), (\ref{eq:cose}) and (\ref{eq:cosf}), we have
	\begin{eqnarray}
		&&
		\bma
		d_2^2 &d_3^2 &-1/3\\
		|e_2|^2 &|e_3|^2 &-1/3\\
		|f_2|^2 &|f_3|^2 &-1/3
		\ema
		\bma
		\cos^2\a_2\\
		\cos^2\a_3\\
		1
		\ema=0. \label{eq:def}
	\end{eqnarray}
	Then (\ref{eq:unitW'}) and (\ref{eq:def}) imply that $\cos^2\a_2+\cos^2\a_3=1$. Note that $\a_2,\a_3\in[0,\p/2]$, so we have
	\begin{eqnarray}
		\a_2+\a_3=\p/2. \label{eq:a2+a3}
	\end{eqnarray}
	From (\ref{eq:d1}), (\ref{eq:d2+d3}) and (\ref{eq:a2+a3}), $d_2$ and $d_3$ are nonnegative and real numbers, we know $d_2=d_3=\dfrac{1}{\sqrt{3}}$. Similarly, we obtain
	\begin{eqnarray}
		&&
		|e_2|=|e_3|=|f_2|=|f_3|=\dfrac{1}{\sqrt{3}}.
	\end{eqnarray}
	Recall that the first and second row vectors of (\ref{h3:1+2}) are orthogonal, the first and third row vectors of (\ref{h3:1+2}) are orthogonal. We have
	\begin{eqnarray}
		&&
		e_1+e_2+e_3=0,
		\\&&
		f_1+f_2+f_3=0.
	\end{eqnarray}
	So we obtain
	\begin{eqnarray}
		&&
		(e_1,e_2,e_3)\propto (1,\o,\o^2) \text{\quad or \quad} (1,\o^2,\o),
		\\&&
		(f_1,f_2,f_3)\propto (1,\o,\o^2) \text{\quad or \quad} (1,\o^2,\o).
	\end{eqnarray}
	Hence $W_2=\frac{1}{\sqrt{3}}
	\begin{bmatrix}
	1 & 1 & 1\\
	1 & \omega & \omega^2\\
	1 & \omega^2 & \omega\\
	\end{bmatrix}$ or
	$\frac{1}{\sqrt{3}}
	\begin{bmatrix}
	1 & 1 & 1\\
	1 & \omega^2 & \omega\\
	1 & \omega & \omega^2\\
	\end{bmatrix}$.
	
	In the following we have two cases (ii.a) and (ii.b) in terms of $\a_2$.
	
	(ii.a) if $\a_2=\p/4$, then (\ref{eq:a2+a3}) implies that $\a_3=\p/4$. By using (\ref{eq:unit1}), (\ref{eq:unit2}) and (\ref{eq:mdl-gd}), we obtain
		$|g_1|^2+|g_3|^2=1,
		|g_1+g_3|=1.
	$
	They imply $g_1g_3=0$. It is a contradiction with the fact that $G$ is a non-monomial unitary matrix.
	
	(ii.b) if $\a_2\neq \p/4$, then from (\ref{eq:unitW'}) and (\ref{eq:mdl-gd})-(\ref{eq:mdl-gf}), we have
	\begin{eqnarray}
		&&
	    |g_1|^2\cos^2\a_2+|g_3|^2\cos^2\a_3=\frac{1}{2},
	    \\&&
	    |g_2|^2\cos^2\a_2+|g_4|^2\cos^2\a_3=\frac{1}{2}.
	\end{eqnarray}
	Using (\ref{eq:unit1}) and (\ref{eq:a2+a3}) we obtain
	\begin{eqnarray}
		&&
		|g_1|=|g_2|=|g_3|=|g_4|=\frac{1}{\sqrt{2}}. \label{mdl-gi}
	\end{eqnarray}
	Eqs. (\ref{eq:mdl-gd}) and (\ref{mdl-gi}) imply that
	\begin{eqnarray}
		&&
		\sin2\a_2(g_1g_3^\ast+g_1^\ast g_3)=0,
		\\&&
		\sin2\a_2(g_2g_4^\ast+g_2^\ast g_4)=0.
	\end{eqnarray}
	So we have two subcases (1) and (2).
	
	(1) If $\sin2\a_2=0$ then $\a_2=0$ or $\p/2$. Then we have $|g_1|=|g_2|=|g_3|=|g_4|=\frac{1}{\sqrt{2}}$. So $G$ is equivalent to $\frac{1}{\sqrt{2}}\bma
	1 & e^{i\t}\\
	1 & -e^{i\t}
	\ema$, where $\t \in [0,2 \p )$.
	
	(2) If $g_1g_3^\ast+g_1^\ast g_3=0$ and $g_2g_4^\ast+g_2^\ast g_4=0$, then we can obtain $g_1=g_2=\frac{1}{\sqrt{2}}$ and $g_3,g_4=\pm \frac{i}{\sqrt{2}}$. Namely $G=\frac{1}{\sqrt{2}}\bma
	1 & e^{i\t}\\
	1 & -e^{i\t}
	\ema$ where $\t=\pm\p/2$.
	
	We have proven assertion (ii).
\end{proof}

\section{Proof of Lemma \ref{le:v:1zero}}
\label{app:le:v:1zero}
	
\begin{proof}
From Lemma \ref{le:sr2} (i.a), we obtain that if $\bbH_3$ is a member of some MUB trio, then $\a_1,\a_2,\a_3\in(0,\p/2)$ and two of them are not equal.

	(i) If $V$ is a unitary matrix with exactly one zero entry, then there exist permutation matrices $P_4, P_5$ such that $V^{''}=P_4 V P_5=\bma
	v_{11} &v_{12} &0\\
	v_{21} &v_{22} &v_{23}\\
	v_{31} &v_{32} &v_{33}
	\ema$, where $v_{jk}\neq 0$. Since $I_2 \ox P_5$ only changes its right matrix by row permutations, (\ref{eq:sr3}) is equivalent
	\begin{widetext}
	\bea
	&&
	\bbH_3=
	(I_2 \ox V^{''})\cdot
	\notag\\&&
	\left(
	\begin{array}{cccccc}
		\cos\a_1 &  0 & 0 & e^{i\g_1}\sin\a_1 & 0 & 0 \\
		0 &  \cos\a_2 &  0 & 0 & e^{i\g_2}\sin\a_2 & 0 \\
		0 & 0 &  \cos\a_3 &  0 & 0 & e^{i\g_3}\sin\a_3 \\
		e^{i\b_1}\sin\a_1 &  0 & 0 & -e^{i(\b_1+\g_1)}\cos\a_1 & 0 & 0 \\
		0 &  e^{i\b_2}\sin\a_2 &  0 & 0 & -e^{i(\b_2+\g_2)}\cos\a_2 & 0 \\
		0 & 0 &  e^{i\b_3}\sin\a_3 &  0 & 0 & -e^{i(\b_3+\g_3)}\cos\a_3 \\
	\end{array}
	\right)
	\notag\\&&
	\cdot
	(I_2 \ox W),
	\label{h3:1+8}
	\eea
	\end{widetext}
	up to the switching of $\a_j,\b_j$ and $\g_j$ and entries of $W$. Suppose $W=W_3D_3$, where
	$W_3=
	\bma
	d_1 & e_1 & f_1\\
	d_2 & e_2 & f_2\\
	d_3 & e_3 & f_3
	\ema
	$, $d_1$, $d_2$ and $d_3$ are nonnegative and real numbers, and $D_3=\diag(1,e^{i\delta_2},e^{i\delta_3})$ is a diagonal unitary matrix, $\delta_2,\delta_3 \in [0,2\p)$.
	
	Since $V^{''}$ can be expressed as the product of two unitary matrices $p_1\oplus G_1$ and $G_2\oplus p_2$, where $G_1$ and $G_2$ are non-monomial unitary matrices. Namely
	\begin{eqnarray}
	V^{''}&&=(p_1\oplus G_1)\cdot (G_2 \oplus p_2) \notag
	\\&&
	=\bma
	p_1 &0 &0\\
	0 &g_{11} &g_{13}\\
	0 &g_{12} &g_{14}
	\ema \cdot
	\bma
	g_{21} &g_{23} &0\\
	g_{22} &g_{24} &0\\
	0 &0 &p_2
	\ema \notag \\
	&&=
	\bma
	p_1g_{21} &p_1g_{23} &0\\
	g_{11}g_{22} &g_{11}g_{24} &p_2g_{13}\\
	g_{12}g_{22} &g_{12}g_{24} &p_2g_{14}
	\ema \notag \\
	&&=
	\bma
	v_{11} &v_{12} &0\\
	v_{21} &v_{22} &v_{23}\\
	v_{31} &v_{32} &v_{33}
	\ema. \label{maV''}
	\end{eqnarray}
	By using $W_3W_3^{\dagger}=W_3^{\dagger}W_3=I$, we know
	\begin{eqnarray}
	d_id_j+e_ie_j^{\ast}+f_if_j^{\ast}=\left\{
	\begin{aligned}
	&1 \quad \quad&i=j,\\
	&0 \quad \quad&i\neq j.\\
	\end{aligned}
	\right. \label{unity:W3}
	\end{eqnarray}
	
	In (\ref{h3:1+8}), let $\bbH_3=[h_{jk}]$, and $|h_{jk}|=\frac{1}{\sqrt{6}}$.
	Then we can obtain
	\begin{eqnarray}
	&&
	|d_1v_{11}\cos\a_1+d_2v_{12}\cos\a_2| \notag \\
	&&=|e_1v_{11}\cos\a_1+e_2v_{12}\cos\a_2| \notag \\
	&&=|f_1v_{11}\cos\a_1+f_2v_{12}\cos\a_2|=\dfrac{1}{\sqrt{6}}, \label{cosa1+cosa2}
	\\&&
	|d_1v_{11}e^{i\b_1}\sin\a_1+d_2v_{12}e^{i\b_2}\sin\a_2| \notag \\
	&&=|e_1v_{11}e^{i\b_1}\sin\a_1+e_2v_{12}e^{i\b_2}\sin\a_2| \notag \\
	&&=|f_1v_{11}e^{i\b_1}\sin\a_1+f_2v_{12}e^{i\b_2}\sin\a_2|=\dfrac{1}{\sqrt{6}}, \label{eq:eib}
	\\&&
	|d_1v_{11}e^{i\g_1}\sin\a_1+d_2v_{12}e^{i\g_2}\sin\a_2| \notag \\
	&&=|e_1v_{11}e^{i\g_1}\sin\a_1+e_2v_{12}e^{i\g_2}\sin\a_2| \notag \\
	&&=|f_1v_{11}e^{i\g_1}\sin\a_1+f_2v_{12}e^{i\g_2}\sin\a_2|=\dfrac{1}{\sqrt{6}}, \label{eq:eig}
	\\&&
	|d_1v_{21}\cos\a_1+d_2v_{22}\cos\a_2+d_3v_{23}\cos\a_3| \notag \\
	&&=|d_1v_{31}\cos\a_1+d_2v_{32}\cos\a_2+d_3v_{33}\cos\a_3|=\dfrac{1}{\sqrt{6}}, \notag \\
	\label{unity:dv}
	\\&&
	|e_1v_{21}\cos\a_1+e_2v_{22}\cos\a_2+e_3v_{23}\cos\a_3| \notag \\
	&&=|e_1v_{31}\cos\a_1+e_2v_{32}\cos\a_2+e_3v_{33}\cos\a_3|=\dfrac{1}{\sqrt{6}}, \notag \\
	\label{unity:ev}
	\\&&
	|f_1v_{21}\cos\a_1+f_2v_{22}\cos\a_2+f_3v_{23}\cos\a_3| \notag \\
	&&=|f_1v_{31}\cos\a_1+f_2v_{32}\cos\a_2+f_3v_{33}\cos\a_3|=\dfrac{1}{\sqrt{6}}, \notag \\
	\label{unity:fv}
	\\&&
	|d_1v_{11}e^{i(\b_1+\g_1)}\cos\a_1+d_2v_{12}e^{i(\b_2+\g_2)}\cos\a_2| \notag \\
	&&=|e_1v_{11}e^{i(\b_1+\g_1)}\cos\a_1+e_2v_{12}e^{i(\b_2+\g_2)}\cos\a_2| \notag \\
	&&=|f_1v_{11}e^{i(\b_1+\g_1)}\cos\a_1+f_2v_{12}e^{i(\b_2+\g_2)}\cos\a_2|=\dfrac{1}{\sqrt{6}}. \notag \\
	\label{unity:vei(b+g)}
	\end{eqnarray}
	Using \eqref{maV''} and (\ref{cosa1+cosa2}) we obtain
	\begin{eqnarray}
	&&
	|v_{11}|^2\cos^2\a_1+|v_{12}|^2\cos^2\a_2=\dfrac{1}{2}. \label{eq:v11+v12}
	\end{eqnarray}
	From \eqref{maV''}, (\ref{unity:W3})-(\ref{cosa1+cosa2}) and (\ref{unity:dv})-(\ref{unity:fv}), we have
	\begin{eqnarray}
	&&
	|v_{21}|^2\cos^2\a_1+|v_{22}|^2\cos^2\a_2+|v_{23}|^2\cos^2\a_3=\dfrac{1}{2}, \notag\\ \label{eq:v21+v22+v23}
	\\&&
	|v_{31}|^2\cos^2\a_1+|v_{32}|^2\cos^2\a_2+|v_{33}|^2\cos^2\a_3=\dfrac{1}{2},\notag\\ \label{eq:v31+v32+v33}
	\\&&
	d_1^2\cos^2\a_1+d_2^2\cos^2\a_2+d_3^2\cos^2\a_3=\dfrac{1}{2}, \label{dcos}
	\\&&
	|e_1|^2\cos^2\a_1+|e_2|^2\cos^2\a_2+|e_3|^2\cos^2\a_3=\dfrac{1}{2},\notag\\ \label{ecos}
	\\&&
	|f_1|^2\cos^2\a_1+|f_2|^2\cos^2\a_2+|f_3|^2\cos^2\a_3=\dfrac{1}{2}. \notag\\ \label{fcos}
	\end{eqnarray}
	
	Eqs. (\ref{unity:W3}) and (\ref{dcos})-(\ref{fcos}) imply that
	\begin{eqnarray}
	&&
	\cos^2\a_1+\cos^2\a_2+\cos^2\a_3=\frac{3}{2}. \label{cosa1+cosa2+cosa3}
	\end{eqnarray}
	
	Recall that the first and second column vectors of (\ref{h3:1+8}) are orthogonal, the first and third column vectors of (\ref{h3:1+8}) are orthogonal, the second and third column vectors of (\ref{h3:1+8}) are orthogonal. We have
	\begin{eqnarray}
	&&
	d_1e_1+d_2e_2+d_3e_3=0, \label{iii:di*ei}
	\\&&
	d_1f_1+d_2f_2+d_3f_3=0,
	\\&&
	e_1^\ast f_1+e_2^\ast f_2+e_3^\ast f_3=0. \label{iii:ei*fi}
	\end{eqnarray}
	
	By simplifying (\ref{maV''}) and (\ref{cosa1+cosa2})-(\ref{eq:eig}), we can obtain
	\begin{eqnarray}
	&&
	\sin\a_1\sin\a_2(g_{21}^\ast g_{23}(e^{i(\b_2-\b_1)}-e^{i(\g_2-\g_1)}) \notag
	\\&&
	+g_{21}g_{23}^\ast(e^{i(\b_2-\b_1)}-e^{i(\g_2-\g_1)})^\ast)=0, \label{iii:b2-b1,g2-g1}
	\\&&
	\sin\a_1\sin\a_2(e_1^\ast e_2g_{21}^\ast g_{23}(e^{i(\b_2-\b_1)}-e^{i(\g_2-\g_1)}) \notag
	\\&&
	+e_1e_2^\ast g_{21}g_{23}^\ast(e^{i(\b_2-\b_1)}-e^{i(\g_2-\g_1)})^\ast)=0, \label{iii:e1e2_b2-b1,g2-g1}
	\\&&
	\sin\a_1\sin\a_2(f_1^\ast f_2g_{21}^\ast g_{23}(e^{i(\b_2-\b_1)}-e^{i(\g_2-\g_1)}) \notag
	\\&&
	+f_1f_2^\ast g_{21}g_{23}^\ast(e^{i(\b_2-\b_1)}-e^{i(\g_2-\g_1)})^\ast)=0. \label{iii:f1f2_b2-b1,g2-g1}
	\end{eqnarray}
	
	Using (\ref{cosa1+cosa2}) and (\ref{unity:vei(b+g)}) we have
	\begin{eqnarray}
	&&
	\cos\a_1\cos\a_2(g_{21}^\ast g_{23}(e^{i(\b_2+\g_2-\b_1-\g_1)}-1) \notag
	\\&&
	+g_{21}g_{23}^\ast(e^{i(\b_2+\g_2-\b_1-\g_1)}-1)^\ast)=0,  \label{iii:d1d2_b2+g2-b1-g1}
	\\&&
	\cos\a_1\cos\a_2(e_1^\ast e_2g_{21}^\ast g_{23}(e^{i(\b_2+\g_2-\b_1-\g_1)}-1) \notag
	\\&&
	+e_1e_2^\ast g_{21}g_{23}^\ast(e^{i(\b_2+\g_2-\b_1-\g_1)}-1)^\ast)=0, \label{iii:e1e2_b2+g2-b1-g1}
	\\&&
	\cos\a_1\cos\a_2(f_1^\ast f_2g_{21}^\ast g_{23}(e^{i(\b_2+\g_2-\b_1-\g_1)}-1) \notag
	\\&&
	+f_1f_2^\ast g_{21}g_{23}^\ast(e^{i(\b_2+\g_2-\b_1-\g_1)}-1)^\ast)=0. \label{iii:f1f2_b2+g2-b1-g1}
	\end{eqnarray}
	
	From (\ref{maV''}) and (\ref{eq:v11+v12}) we have
	\begin{eqnarray}
	&&
	|g_{21}|^2=|g_{24}|^2=\frac{\cos2\a_2}{\cos2\a_2-\cos2\a_1}, \label{mdl:g21}
	\\&&
	|g_{22}|^2=|g_{23}|^2=\frac{-\cos2\a_1}{\cos2\a_2-\cos2\a_1}, \label{mdl:g22}
	\end{eqnarray}
	where $\a_1\in (0,\p/4), \a_2\in(\p/4,\p/2)$ or $\a_1\in(\p/4,\p/2), \a_2\in (0,\p/4)$.
	So $G_2=\bma
	g_{21} &g_{23}\\
	g_{22} &g_{24}
	\ema$ where $|g_{21}|^2=|g_{24}|^2=\frac{\cos2\a_2}{\cos2\a_2-\cos2\a_1}$ and $	|g_{22}|^2=|g_{23}|^2=\frac{-\cos2\a_1}{\cos2\a_2-\cos2\a_1}$. And there exist a diagonal unitary matrix $\diag (e^{i\vartheta_1},e^{i\vartheta_2},1)$ such that $\diag (e^{i\vartheta_1},e^{i\vartheta_2},1)(G_2\oplus 1)=\bma
	e^{i\vartheta_1}g_{21} & e^{i\vartheta_1}g_{23} &0\\
	e^{i\vartheta_2}g_{22} & e^{i\vartheta_2}g_{24} &0\\
	0 &0 &1
	\ema$, where $e^{i\vartheta_1}g_{21}$ and $e^{i\vartheta_2}g_{22}$ are real numbers, $\vartheta_1,\vartheta_2\in [0,2\p)$. Hence $G_2=\bma
	g_{21} &g_{23}\\
	g_{22} &g_{24}
	\ema$ is equivalent to $\bma
	\sqrt{\frac{\cos2\a_2}{\cos2\a_2-\cos2\a_1}} &e^{i\vartheta_1}g_{23}\\
	\sqrt{\frac{-\cos2\a_1}{\cos2\a_2-\cos2\a_1}} &e^{i\vartheta_2}g_{24}
	\ema$, where $|e^{i\vartheta_1}g_{23}|=\sqrt{\frac{-\cos2\a_1}{\cos2\a_2-\cos2\a_1}}$, $|e^{i\vartheta_2}g_{24}|=\sqrt{\frac{\cos2\a_2}{\cos2\a_2-\cos2\a_1}}$.
	
	Simplifying (\ref{eq:v21+v22+v23})-(\ref{eq:v31+v32+v33}) and (\ref{mdl:g21})-(\ref{mdl:g22}) we can obtain $|g_{11}|=|g_{12}|=|g_{13}|=|g_{14}|=\frac{1}{\sqrt{2}}$. So $G_1$ is equivalent to $\frac{1}{\sqrt{2}}
	\bma
	1 &e^{i\theta}\\
	1 &-e^{i\theta}
	\ema$ where $\theta\in[0,2\p]$.
\end{proof}	

\section{Proof of Theorem \ref{thm:sr3} (iii)}
\label{app:thm 6 (iii)}
\begin{proof}
	In Lemma \ref{le:v:1zero}, we characterize $\bbH_3$. Here we prove that
	if $V$ is a unitary matrix with exactly one zero entry then $\bbH_3$ is not a member of any MUB trio.
	
	From (\ref{iii:b2-b1,g2-g1})-(\ref{iii:f1f2_b2-b1,g2-g1}) and (\ref{iii:d1d2_b2+g2-b1-g1})-(\ref{iii:f1f2_b2+g2-b1-g1}), we know $\sin\a_1\sin\a_2=0$ or $\cos\a_1\cos\a_2=0$ is a contradiction with the condition $\a_1,\a_2,\a_3\in(0,\p/2)$ and two of them are not equal. So we have
	\begin{widetext}
	\begin{eqnarray}
	&&
	g_{21}^\ast g_{23}(e^{i(\b_2-\b_1)}-e^{i(\g_2-\g_1)})+g_{21}g_{23}^\ast(e^{i(\b_2-\b_1)}-e^{i(\g_2-\g_1)})^\ast=0,
	\\&&
	e_1^\ast e_2g_{21}^\ast g_{23}(e^{i(\b_2-\b_1)}-e^{i(\g_2-\g_1)})
	+e_1e_2^\ast g_{21}g_{23}^\ast(e^{i(\b_2-\b_1)}-e^{i(\g_2-\g_1)})^\ast=0,
	\\&&
	f_1^\ast f_2g_{21}^\ast g_{23}(e^{i(\b_2+\g_2-\b_1-\g_1)}-1)
	+f_1f_2^\ast g_{21}g_{23}^\ast(e^{i(\b_2+\g_2-\b_1-\g_1)}-1)^\ast=0,
	\\&&
	g_{21}^\ast g_{23}(e^{i(\b_2+\g_2-\b_1-\g_1)}-1)
	+g_{21}g_{23}^\ast(e^{i(\b_2+\g_2-\b_1-\g_1)}-1)^\ast=0,
	\\&&
	e_1^\ast e_2g_{21}^\ast g_{23}(e^{i(\b_2+\g_2-\b_1-\g_1)}-1)
	+e_1e_2^\ast g_{21}g_{23}^\ast(e^{i(\b_2+\g_2-\b_1-\g_1)}-1)^\ast=0,
	\\&&
	f_1^\ast f_2g_{21}^\ast g_{23}(e^{i(\b_2+\g_2-\b_1-\g_1)}-1)
	+f_1f_2^\ast g_{21}g_{23}^\ast(e^{i(\b_2+\g_2-\b_1-\g_1)}-1)^\ast=0.
	\end{eqnarray}
	\end{widetext}
		
	They imply the following two cases (i) and (ii).
	
	(i) If $g_{21}^\ast g_{23}(e^{i(\b_2-\b_1)}-e^{i(\g_2-\g_1)})$, $e_1^\ast e_2g_{21}^\ast g_{23}(e^{i(\b_2-\b_1)}-e^{i(\g_2-\g_1)})$, $f_1^\ast f_2g_{21}^\ast g_{23}(e^{i(\b_2-\b_1)}-e^{i(\g_2-\g_1)})$, $g_{21}^\ast g_{23}(e^{i(\b_2+\g_2-\b_1-\g_1)}-1)$, $	e_1^\ast e_2g_{21}^\ast g_{23}(e^{i(\b_2+\g_2-\b_1-\g_1)}-1)$ and $f_1^\ast f_2g_{21}^\ast g_{23}(e^{i(\b_2+\g_2-\b_1-\g_1)}-1)$ are all pure imaginaries. Then $e_1^\ast e_2$ and $f_1^\ast f_2$ are real numbers. Evidently there exists a diagonal unitary $D_3'=\diag(1,e^{i\zeta_1},e^{i\zeta_2})$ such that $W_3D_3'=
	\bma
	d_1 &e^{i\zeta_1}e_1 &e^{i\zeta_2}f_1\\
	d_2 &e^{i\zeta_1}e_2 &e^{i\zeta_2}f_2\\
	d_3 &e^{i\zeta_1}e_3 &e^{i\zeta_2}f_3
	\ema=
	\bma
	d_1 &e'_1 &f'_1\\
	d_2 &e'_2 &f'_2\\
	0 &e'_3 &f'_3
	\ema$ is a unitary where $e'_1,e'_2,f'_1,f'_2$ are real, $e'_3,f'_3$ are complex, $\zeta_1,\zeta_2 \in [0,2\p]$. Then from (\ref{iii:di*ei})-(\ref{iii:ei*fi}), we can obtain $e_3=f_3=0$ or $d_3=0$.
	
	If $e_3=f_3=0$ then $W$ in (\ref{h3:1+8}) is equivalent to the case $V=1\oplus G$ in Lemma \ref{le:sr3}. Since if $I_6, B_1, B_2~and~B_3$ are four MUBs in six-dimensional system then there exists $B_1^\dagger$ such that $I_6, B_1, B_2~and~B_3$ become $B_1^\dagger, I_6, B_1^\dagger B_2~and~B_1^\dagger B_3$, respectively. So $B_1^\dagger, I_6, B_1^\dagger B_2~and~B_1^\dagger B_3$ are still four MUBs. Hence if $e_3=f_3=0$ then $\bbH_3^\dagger$ is equivalent to the case $V=1\oplus G$ in Lemma \ref{le:sr3}.
	
	If $d_3=0$, then from (\ref{dcos}), $W_3W_3^{\dagger}=W_3^{\dagger}W_3=I$, $d_1$ and $d_2$ are nonnegative and real numbers, one can obtain
	\begin{eqnarray}
	&&
	d_1=\sqrt{\dfrac{\cos2\a_2}{\cos2\a_2-\cos2\a_1}}, \label{iii:d1}
	\\&&
	d_2=\sqrt{\dfrac{-\cos2\a_1}{\cos2\a_2-\cos2\a_1}}. \label{iii:d2}
	\end{eqnarray}
	Then from (\ref{iii:di*ei})-(\ref{iii:ei*fi}), we know
	\begin{eqnarray}
	&&
	\frac{e_1}{e_2}=-\sqrt{\frac{-\cos2\a_1}{\cos2\a_2}}, \label{iii:e1/e2}
	\\&&
	\frac{f_1}{f_2}=-\sqrt{\frac{-\cos2\a_1}{\cos2\a_2}}.
	\end{eqnarray}
	And \eqref{cosa1+cosa2} becomes
	\begin{eqnarray}
	&&
	|d_1g_{21}\cos\a_1+d_2g_{23}\cos\a_2|=|e'_1g_{21}\cos\a_1+e'_2g_{23}\cos\a_2| \notag \\
	&&=|f'_1g_{21}\cos\a_1+f'_2g_{23}\cos\a_2|=\dfrac{1}{\sqrt{6}}.    \label{iii:e1'e2'}
	\end{eqnarray}
	Eqs. (\ref{iii:d1})-(\ref{iii:e1'e2'}) imply that $(e'_1,e'_2)=\pm(f'_1,f'_2)$. Let $(e'_1,e'_2)=(f'_1,f'_2)$, so $W_3D_3'$ becomes $\bma
	d_1 &e'_1 &e'_1\\
	d_2 &e'_2 &e'_2\\
	0 &e'_3 &f'_3
	\ema$. Since the second column and the third column vectors of $W_3D_3'$ are orthorgonal, we know $e'_3=-f'_3$, $|e'_3|=|e_3|=\frac{1}{\sqrt{2}}$, $e'^2_1+e'^2_2=\frac{1}{2}$.
	Then from (\ref{cosa1+cosa2+cosa3}) and (\ref{iii:e1/e2}), we have
	\begin{eqnarray}
	&&
	|e'_1|=\frac{1}{\sqrt{2}}\sqrt{\frac{-\cos2\a_1}{\cos2\a_2-\cos2\a_1}},
	\\&&
	|e'_2|=\frac{1}{\sqrt{2}}\sqrt{\frac{\cos2\a_2}{\cos2\a_2-\cos2\a_1}}.
	\end{eqnarray}
	So $W_3D'_3=\bma
	d_1 & \frac{1}{\sqrt{2}}d_2 &\frac{1}{\sqrt{2}}d_2\\
	d_2 &-\frac{1}{\sqrt{2}}d_1 &-\frac{1}{\sqrt{2}}d_1\\
	0 &\frac{1}{\sqrt{2}}e^{i\zeta_1} &-\frac{1}{\sqrt{2}}e^{i\zeta_1}
	\ema$, where $d_1=\sqrt{\dfrac{\cos2\a_2}{\cos2\a_2-\cos2\a_1}},	d_2=\sqrt{\dfrac{-\cos2\a_1}{\cos2\a_2-\cos2\a_1}}$, $\zeta_1\in [0,2\p]$.
	
	From (\ref{mdl:g21})-(\ref{mdl:g22}), we can obtain that $G_2$ is equivalent to $
	\bma
	d_1 &d_2e^{i\vartheta_1}\\
	d_2 &-d_1e^{i\vartheta_1}
	\ema$, where $d_1=\sqrt{\dfrac{\cos2\a_2}{\cos2\a_2-\cos2\a_1}},	d_2=\sqrt{\dfrac{-\cos2\a_1}{\cos2\a_2-\cos2\a_1}}$, $\vartheta_1\in[0,2\p)$.
	Hence
	\begin{eqnarray}
	&&
	V''=(p_1\oplus G_1)\diag(e^{-i\vartheta_1},e^{-i\vartheta_1},1)
	\\&&\diag(e^{i\vartheta_1},e^{i\vartheta_1},1)(G_2\oplus p_2) \notag
	\\&&
	=\bma
	d_1 &d_2 &0\\
	\frac{1}{\sqrt{2}}d_2 &-\frac{1}{\sqrt{2}}d_1 &\frac{1}{\sqrt{2}}\\
	\frac{1}{\sqrt{2}}d_2 &-\frac{1}{\sqrt{2}}d_1 &-\frac{1}{\sqrt{2}}
	\ema
	\cdot
	\bma
	e^{-i\vartheta_1} &0 &0\\
	0& 1 &0\\
	0& 0&e^{i\theta}
	\ema,
	\\&&
	W=(W_3D'_3)D'^{-1}_3D_3 \notag
	\\&&
	=
	\bma
	1 &0 &0\\
	0 &1 &0\\
	0 &0 &e^{i\zeta_1}
	\ema
	\cdot
	\bma
	d_1 & \frac{1}{\sqrt{2}}d_2 &\frac{1}{\sqrt{2}}d_2\\
	d_2 &-\frac{1}{\sqrt{2}}d_1 &-\frac{1}{\sqrt{2}}d_1\\
	0 &\frac{1}{\sqrt{2}} &-\frac{1}{\sqrt{2}}
	\ema	\cdot
	\\&&
	\bma
	1 &0 &0\\
	0 &e^{-i\zeta_1} &0\\
	0 &0& e^{-i\zeta_1}
	\ema
	\cdot
	\bma
	1 &0 &0\\
	0 &e^{i\delta_1} &0\\
	0 &0& e^{i\delta_2}
	\ema,
	\end{eqnarray}
	where $d_1=\sqrt{\dfrac{\cos2\a_2}{\cos2\a_2-\cos2\a_1}},	d_2=\sqrt{\dfrac{-\cos2\a_1}{\cos2\a_2-\cos2\a_1}}$, $\vartheta_1, \theta, \zeta_1, \delta_1,\delta_2 \in[0,2\p)$.
	Then (\ref{h3:1+8}) is equivalent to
	\begin{widetext}
	\begin{eqnarray}
	&&
	\bbH_3
	=(I_2 \ox
	(\bma
	d_1 &d_2 &0\\
	\frac{1}{\sqrt{2}}d_2 &-\frac{1}{\sqrt{2}}d_1 &\frac{1}{\sqrt{2}}\\
	\frac{1}{\sqrt{2}}d_2 &-\frac{1}{\sqrt{2}}d_1 &-\frac{1}{\sqrt{2}}
	\ema
	\bma
	e^{-i\vartheta_1} &0 &0\\
	0& 1 &0\\
	0& 0&e^{i\vartheta_2}
	\ema)) \cdot
	\notag\\&&
	\left(
	\begin{array}{cccccc}
	\cos\a_1 &  0 & 0 & e^{i\g_1}\sin\a_1 & 0 & 0 \\
	0 &  \cos\a_2 &  0 & 0 & e^{i\g_2}\sin\a_2 & 0 \\
	0 & 0 &  \cos\a_3 &  0 & 0 & e^{i\g_3}\sin\a_3 \\
	e^{i\b_1}\sin\a_1 &  0 & 0 & -e^{i(\b_1+\g_1)}\cos\a_1 & 0 & 0 \\
	0 &  e^{i\b_2}\sin\a_2 &  0 & 0 & -e^{i(\b_2+\g_2)}\cos\a_2 & 0 \\
	0 & 0 &  e^{i\b_3}\sin\a_3 &  0 & 0 & -e^{i(\b_3+\g_3)}\cos\a_3 \\
	\end{array}
	\right)
	\notag\\&&
	\cdot
	(I_2 \ox
	\bma
	d_1 & \frac{1}{\sqrt{2}}d_2 &\frac{1}{\sqrt{2}}d_2\\
	d_2 &-\frac{1}{\sqrt{2}}d_1 &-\frac{1}{\sqrt{2}}d_1\\
	0 &\frac{1}{\sqrt{2}} &-\frac{1}{\sqrt{2}}
	\ema),	\label{h3:1+8equivalent}
	\end{eqnarray}
	where  $\vartheta_1,\vartheta_2 \in[0,2\p]$ are the functions of $\a_1, \a_2$.
	Next we compute $\vartheta_1,\vartheta_2$ such that (\ref{h3:1+8equivalent}) is a Schmidt-rank-three order-six CHM.
	In (\ref{h3:1+8equivalent}), $|h_{jk}|=\frac{1}{\sqrt{6}}$, so we have	
	\begin{eqnarray}
	&&
	|d_1^2e^{-i\vartheta_1}\cos \a_1+d_2^2\cos\a_2|=\frac{1}{\sqrt{2}}|d_1d_2e^{-i\vartheta_1}\cos\a_1-d_1d_2\cos\a_2|=\frac{1}{\sqrt{6}}, \label{iii:VAW1}
	\\&&
	\frac{1}{2}|d_2^2e^{-i\vartheta_1}\cos\a_1+d_1^2\cos\a_2+e^{i\vartheta_2}\cos\a_3|
	=\frac{1}{2}|d_2^2e^{-i\vartheta_1}\cos\a_1+d_1^2\cos\a_2-e^{i\vartheta_2}\cos\a_3|=\frac{1}{\sqrt{6}}, \label{iii:VAW2}
	\\&&
	|d_1^2e^{i(\g_1-\vartheta_1)}+d_2^2e^{i\g_2}\sin\a_2|
	=\frac{1}{\sqrt{2}}|d_1d_2e^{i(\g_1-\vartheta_1)}\sin\a_1-d_1d_2e^{i\g_2}\sin\a_2|=\frac{1}{\sqrt{6}}, \label{iii:VBW1}
	\\&&
	\frac{1}{2}|d_2^2e^{i(\g_1-\vartheta_1)}+d_1^2e^{i\g_2}\sin\a_2+e^{i(\g_3+\vartheta_2)}\sin\a_3|
	=\frac{1}{2}|d_2^2e^{i(\g_1-\vartheta_1)}+d_1^2e^{i\g_2}\sin\a_2-e^{i(\g_3+\vartheta_2)}\sin\a_3|=\frac{1}{\sqrt{6}},  \label{iii:VBW2}
	\\&&
	|d_1^2e^{i(\b_1-\vartheta_1)}\sin\a_1+d_2^2e^{i\b_2}\sin\a_2|
	\frac{1}{\sqrt{2}}|d_1d_2e^{i(\b_1-\vartheta_1)}\sin\a_1-d_1d_2e^{i\b_2}\sin\a_2|=\frac{1}{\sqrt{6}},    \label{iii:VCW1}
	\\&&
	\frac{1}{2}|d_2^2e^{i(\b_1-\vartheta_1)}+d_1^2e^{i\b_2}\sin\a_2+e^{i(\b_3+\vartheta_2)}\sin\a_3|
	=\frac{1}{2}|d_2^2e^{i(\b_1-\vartheta_1)}+d_1^2e^{i\b_2}\sin\a_2-e^{i(\b_3+\vartheta_2)}\sin\a_3|=\frac{1}{\sqrt{6}},   \label{iii:VCW2}
	\\&&
	|d_1^2e^{i(\b_1+\g_1-\vartheta_1)}+d_2^2e^{i(\b_2+\g_2)}\cos\a_2|
	=\frac{1}{\sqrt{2}}|d_1d_2e^{i(\b_1+\g_1-\vartheta_1)}\cos\a_1-d_1d_2e^{i(\b_2+\g_2)}\cos\a_2|=\frac{1}{\sqrt{6}},    \label{iii:VDW1}
	\\&&
	\frac{1}{2}|d_2^2e^{i(\b_1+\g_1-\vartheta_1)}\cos\a_1+d_1^2e^{i(\b_2+\g_2)}\cos\a_2+e^{i(\b_3+\g_3+\vartheta_2)}\cos\a_3| \notag \\
	&&
	=\frac{1}{2}|d_2^2e^{i(\b_1+\g_1-\vartheta_1)}\cos\a_1+d_1^2e^{i(\b_2+\g_2)}\cos\a_2-e^{i(\b_3+\g_3+\vartheta_2)}\cos\a_3|=\frac{1}{\sqrt{6}}. \label{iii:VDW2}
	\end{eqnarray}
	
	Simplifying (\ref{iii:VAW2}) and (\ref{iii:VDW2}) one can obtain
	\begin{eqnarray}
	&&
	\cos(\vartheta_1+\vartheta_2)\cos\a_1\cos2\a_1=\cos\vartheta_2\cos\a_2\cos2\a_2, \notag \label{iii:coscos}
	\\&&
	\cos(\vartheta_1+\vartheta_2+\b_3+\g_3-\b_1-\g_1)\cos\a_1\cos2\a_1  \notag
	\\&&
	=\cos(\vartheta_2+\b_3+\g_3-\b_2-\g_2)\cos\a_2\cos2\a_2.
	\end{eqnarray}
	Similarly, from (\ref{iii:VBW2}) and (\ref{iii:VCW2}), we have
	\begin{eqnarray}
	&&
	\cos(\vartheta_1+\vartheta_2-\g_3-\g_1)\sin\a_1\cos2\a_1=\cos(\vartheta_2+\g_3-\g_2)\sin\a_2\cos2\a_2,
	\\&&
	\cos(\vartheta_1+\vartheta_2+\b_3-\b_1)\sin\a_1\cos2\a_1=\cos(\vartheta_2+\b_3-\b_2)\sin\a_2\cos2\a_2. \label{iii:sincos}
	\end{eqnarray}
By Comparing (\ref{iii:VAW1}) and (\ref{iii:VDW1}), (\ref{iii:VBW1}) and (\ref{iii:VCW1}), we obtain
	\begin{eqnarray}
	&&
	\cos(\vartheta_1+\b_2+\g_2-\b_1-\g_1)=\cos\vartheta_1,
	\\&&
	\cos(\vartheta_1+\g_2-\g_1)=\cos(\vartheta_1+\b_2-\b_1).
	\end{eqnarray}
	They imply
	\begin{eqnarray}
	&&
	\vartheta_1+\frac{\g_2+\b_2-\g_1-\b_1}{2}=k_1\p ~~or~~\frac{\g_2+\b_2-\g_1-\b_1}{2}=k_2\p,   \label{iii:s1}
	\\&&
	\vartheta_1+\frac{\g_2+\b_2-\g_1-\b_1}{2}=k_1\p  ~~or~~\frac{\g_2-\g_1-(\b_2-\b_1)}{2}=k_3\p, \label{iii:s2}
	\end{eqnarray}
	where $k_1,k_2,k_3 \in \bbZ$. Thus in the following we have two subcases (i.a) and (i.b) in terms of (\ref{iii:s1}) and (\ref{iii:s2}).
	
	(i.a). If $\frac{\g_2+\b_2-\g_1-\b_1}{2}=k_2\p$ and $\frac{\g_2-\g_1-(\b_2-\b_1)}{2}=k_3\p$, then $\g_2-\g_1=(k_2+k_3)\p$, $\b_2-\b_1=(k_2-k_3)\p$, $s=(k_1-k_3)\p$. Then (\ref{iii:coscos})-(\ref{iii:sincos}) can be simplified as
	\begin{eqnarray}
	&&
	\frac{\cos\a_2\cos 2\a_2}{\cos\a_1\cos2\a_2}=\pm 1,
	\\&&
	\frac{\sin\a_2\cos 2\a_2}{\sin\a_1\cos2\a_2}=\pm 1.
	\end{eqnarray}
	So we have
	\begin{eqnarray}
	&&
	\frac{\cos\a_2\sin\a_1}{\cos\a_1\sin\a_2}=\pm 1.
	\end{eqnarray}
	It implies that $\a_1=\a_2$ or $\a_1+\a_2=\p$. It is a contradiction with the condition $\a_1\in (0,\p/4), \a_2\in(\p/4,\p/2)$ or $\a_1\in(\p/4,\p/2), \a_2\in (0,\p/4)$.
	
	(i.b) If $\vartheta_1+\frac{\g_2+\b_2-\g_1-\b_1}{2}=k_1\p$. From (\ref{iii:VAW1}) and (\ref{iii:coscos}), we have
	\begin{eqnarray}
	&&
	\cos\vartheta_1=\frac{-(\cos 2\a_1-\cos 2\a_2)^2+6\cos^2 2\a_1 \cos^2\a_2+6\cos^2\a_1 \cos^2 2\a_2}{12\cos\a_1 \cos 2\a_1 \cos \a_2 \cos 2\a_2}, \label{expr:cos s}
	\\&&
	\cos^2 \vartheta_2=\frac{(-2+\cos 2\a_1)^2\sec ^2\a_2+8\cos 2\a_1\sec 2\a_2(4+\cos 2\a_1(2+\sec 2\a_2))}{24(1+3\cos 2\a_1+3\cos 2\a_2)}. \label{expr:cos t2} \notag
	\\
	\end{eqnarray}
	Note that in (\ref{h3:1+8equivalent}),
	\begin{eqnarray}
	&&
	(I_2 \ox
	\bma
	e^{-i\vartheta_1} &0 &0\\
	0& 1 &0\\
	0& 0&e^{i\vartheta_2}
	\ema) \cdot
	\notag\\&&
	\left(
	\begin{array}{cccccc}
	\cos\a_1 &  0 & 0 & e^{i\g_1}\sin\a_1 & 0 & 0 \\
	0 &  \cos\a_2 &  0 & 0 & e^{i\g_2}\sin\a_2 & 0 \\
	0 & 0 &  \cos\a_3 &  0 & 0 & e^{i\g_3}\sin\a_3 \\
	e^{i\b_1}\sin\a_1 &  0 & 0 & -e^{i(\b_1+\g_1)}\cos\a_1 & 0 & 0 \\
	0 &  e^{i\b_2}\sin\a_2 &  0 & 0 & -e^{i(\b_2+\g_2)}\cos\a_2 & 0 \\
	0 & 0 &  e^{i\b_3}\sin\a_3 &  0 & 0 & -e^{i(\b_3+\g_3)}\cos\a_3 \\
	\end{array}
	\right) \notag
	\end{eqnarray}
	is unitary.
	We have
	\begin{eqnarray}
	&&
	1+e^{2i(\b_1+\g_1)}
	=1+e^{-2i(\b_2+\g_2)}
	=1+e^{2i(\b_3+\g_3)}.
	\end{eqnarray}
	It implies that $\b_1+\g_1=(m_1+\frac{1}{2})\p,\b_2+\g_2=(m_2+\frac{1}{2})\p,\b_3+\g_3=(m_3+\frac{1}{2})\p$, where $m_1,m_2,m_3\in \bbZ$. Together with \eqref{iii:coscos}-\eqref{iii:sincos}, one can obtain
	\begin{eqnarray}
	\vartheta_1&&=k_1\p+\frac{\g_1+\b_1-\g_2-\b_2}{2}=\frac{1}{2}(2 k_1+m_1-m_2)\p,
	\\
	\vartheta_2&&=\frac{1}{2}(\b_2+\g_2-\b_3-\g_3-2k_1 \p +k_2 \p)=\frac{1}{2}(-2 k_1+k_2+m_2-m_3)\p.
	\end{eqnarray}
	Hence $\vartheta_1+\vartheta_2=\frac{1}{2}(k_2+m_1-m_3)\p$. It means that $\frac{\cos(\vartheta_1+\vartheta_2)}{\cos\vartheta_2}=\frac{\cos\a_2\cos2\a_2}{\cos\a_1\cos2\a_1}$ is equal to -1,0 or 1. So $(\cos (\vartheta_1+\vartheta_2),\cos \vartheta_2)=(1,-1), (-1,1)$ or $(0,0)$. In the first or second case we have $\cos \vartheta_1=-1$. From the expressions of $\cos\vartheta_1$ and $\cos^2 \vartheta_2$ in (\ref{expr:cos s})-(\ref{expr:cos t2}), one can obtain that $(\cos\vartheta_1,\cos^2 \vartheta_2)=(-1,1)$ has no solutions with $\a_1\in (0,\p/4), \a_2\in(\p/4,\p/2)$ or $\a_1\in(\p/4,\p/2), \a_2\in (0,\p/4)$. Similarly, the third case is excluded by $(\cos\vartheta_1,\cos^2 \vartheta_2)=(0,0)$ from the expressions in (\ref{expr:cos s})-(\ref{expr:cos t2}).

	(ii) If $\b_2-\b_1-(\g_2-\g_1)=0$ and $\b_2+\g_2-\b_1-\g_1=0$ then $\b_2=\b_1$, $\g_2=\g_1$. Then \eqref{eq:sr3} is equivalent to
	\begin{eqnarray}
	&&
	(I_2 \otimes
	\bma
	e^{i \vartheta_2}\sqrt{\frac{\cos2\a_2}{\cos2\a_2-\cos2\a_1}} &-e^{i \vartheta_2}\sqrt{\frac{-\cos2\a_1}{\cos2\a_2-\cos2\a_1}} &0\\
	\frac{1}{\sqrt{2}} \sqrt{\frac{-\cos2\a_1}{\cos2\a_2-\cos2\a_1}} &  \sqrt{\frac{\cos2\a_2}{\cos2\a_2-\cos2\a_1}} & \frac{1}{\sqrt{2}} \\
	\frac{1}{\sqrt{2}} \sqrt{\frac{-\cos2\a_1}{\cos2\a_2-\cos2\a_1}} &  \sqrt{\frac{\cos2\a_2}{\cos2\a_2-\cos2\a_1}} & -\frac{1}{\sqrt{2}}
	\ema) \cdot \notag \\
	&&
	\left(
	\begin{array}{cccccc}
	e^{-i \vartheta_2}\cos\a_1 &  0 & 0 & e^{i(\g_1-\vartheta_2)}\sin\a_1 & 0 & 0 \\
	0 &  \cos\a_2 &  0 & 0 & e^{i\g_1}\sin\a_2 & 0 \\
	0 & 0 &  e^{i \theta}\cos\a_3 &  0 & 0 & e^{i(\g_3+\theta)}\sin\a_3 \\
	e^{i(\b_1-\vartheta_2)}\sin\a_1 &  0 & 0 & -e^{i(\b_1+\g_1-\vartheta_2)}\cos\a_1 & 0 & 0 \\
	0 &  e^{i\b_1}\sin\a_2 &  0 & 0 & -e^{i(\b_1+\g_1)}\cos\a_2 & 0 \\
	0 & 0 &  e^{i(\b_3+\theta)}\sin\a_3 &  0 & 0 & -e^{i(\b_3+\g_3+\theta)}\cos\a_3 \\
	\end{array}
	\right) \cdot
	\notag \\
	&&
	(I_2 \otimes
	\bma
	d_1 & e_1 & f_1 \\
	d_2 & e_2 & f_2\\
	d_3 & e_3 & f_3
	\ema). \label{iii:ma:b1=b2,g1=g2}
	\end{eqnarray}
In \eqref{iii:ma:b1=b2,g1=g2}, let $|h_{jk}|=\frac{1}{\sqrt{6}}$. One can obtain
	\begin{eqnarray}
	&&
	|-d_2 e^{i \vartheta_2}\cos\a_2 \sqrt{\frac{-\cos2\a_1}{\cos2\a_2-\cos2\a_1}}+d_1\cos \a_1\sqrt{\frac{\cos2\a_2}{\cos2\a_2-\cos2\a_1}}|=\frac{1}{\sqrt{6}}, \label{iii:mdl:h_11}
	\\&&
	|e^{i \b_1}(d_1\sqrt{\frac{\cos2\a_2}{\cos2\a_2-\cos2\a_1}}\sin\a_1 -d_2 e^{i \vartheta_2}\sqrt{\frac{-\cos2\a_1}{\cos2\a_2-\cos2\a_1}} \sin\a_2 )|=\frac{1}{\sqrt{6}}, \label{iii:mdl:h_41}
	\\&&
	|-e^{i\vartheta_2}e_2 \cos\a_2\sqrt{\frac{-\cos2\a_1}{\cos2\a_2-\cos2\a_1}}+e_1\cos\a_1 \sqrt{\frac{\cos2\a_2}{\cos2\a_2-\cos2\a_1}} |=\frac{1}{\sqrt{6}}, \label{iii:mdl:h_12}
	\\&&
	|e^{i \b_1}(e_1\sqrt{\frac{\cos2\a_2}{\cos2\a_2-\cos2\a_1}} \sin\a_1-e^{i\vartheta_2}e_2\sqrt{\frac{-\cos2\a_1}{\cos2\a_2-\cos2\a_1}}\sin\a_2)|=\frac{1}{\sqrt{6}}. \label{iii:mdl:h_42}
	\end{eqnarray}
	Simplifying 	\eqref{iii:mdl:h_11} and \eqref{iii:mdl:h_41} we have
	\begin{eqnarray}
	&&
	d_1^2\frac{\cos 2\a_2}{\cos 2\a_2-\cos 2\a_1}+d_2^2\frac{-\cos 2\a_1}{\cos 2\a_2-\cos 2\a_1}-2d_1 d_2\sqrt{\frac{-\cos2\a_1}{\cos2\a_2-\cos2\a_1}}\sqrt{\frac{\cos2\a_2}{\cos2\a_2-\cos2\a_1}}\cos(\a_1-\a_2)\cos \vartheta_2=\frac{1}{3}, \notag
	\\
	\\&&
	d_1^2 \frac{\cos 2\a_1 \cos 2\a_2}{\cos 2\a_2-\cos 2\a_1}+d_2^2\frac{-\cos 2\a_1 \cos 2\a_2}{\cos 2\a_2-\cos 2\a_1}-2d_1 d_2\sqrt{\frac{-\cos2\a_1}{\cos2\a_2-\cos2\a_1}}\sqrt{\frac{\cos2\a_2}{\cos2\a_2-\cos2\a_1}}\cos(\a_1+\a_2)\cos \vartheta_2=0. \notag
	\\
	\end{eqnarray}	
	By solving the above equations, we obtain
	\begin{eqnarray}
	&&
	\cos \vartheta_2=\frac{\cos 2\a_2-\cos 2\a_1+3d_2^2\cos 2\a_1-3d_1^2\cos 2\a_2}{6 d_1 d_2 \cos(
		\a_1-\a_2)(\cos 2\a_1-\cos 2\a_2)\sqrt{\frac{-\cos2\a_1}{\cos2\a_2-\cos2\a_1}}\sqrt{\frac{\cos2\a_2}{\cos2\a_2-\cos2\a_1}}}, \label{iii.b:cost1}
	\\&&
	\cos \vartheta_2=\frac{(d_2^2-d_1^2)\cos 2\a_1 \cos 2\a_2}{2 d_1 d_2 \cos(
		\a_1+\a_2)(\cos 2\a_1-\cos 2\a_2)\sqrt{\frac{-\cos2\a_1}{\cos2\a_2-\cos2\a_1}}\sqrt{\frac{\cos2\a_2}{\cos2\a_2-\cos2\a_1}}}. \label{iii.b:cost2}
	\end{eqnarray}
	They imply that
	\begin{eqnarray}
	d_1^2=\frac{3 d_2^2 \sin 2 (\a_1-\a_2)+(2-3 d_2^2)\sin 2(\a_1+\a_2)}{6 \sin 2 \a_1 \cos 2\a_2}, \notag \\
	\end{eqnarray}
	so we know
	$(3 d_1^2-1)\sin 2\a_1 \cos 2\a_2+(3d_2^2-1)\cos 2\a_1 \sin 2\a_2=0$. Note that $\a_1\in (0,\p/4), \a_2\in(\p/4,\p/2)$ or $\a_1\in(\p/4,\p/2), \a_2\in (0,\p/4)$, $d_1,d_2,d_3$ are  nonnegative and real numbers. So $d_1=d_2=d_3=\frac{\sqrt{3}}{3}$.
	Then \eqref{iii.b:cost1}-\eqref{iii.b:cost2} imply that $\cos \vartheta_2=0$. Hence $\vartheta_2=\frac{\p}{2}$ or $\frac{3\p}{2}$.
	
Similarly, simplifying Eqs.\eqref{iii:mdl:h_41}-\eqref{iii:mdl:h_42} we have
	\begin{eqnarray}
	&&
	|e_1|^2\frac{\cos 2\a_2}{\cos 2\a_2-\cos 2\a_1}+|e_2|^2\frac{-\cos 2\a_1}{\cos 2\a_2-\cos 2\a_1}+\frac{\sqrt{-\cos2\a_1 \cos  2\a_2}}{|\cos2\a_2-\cos2\a_1|}[(-e^{i\vartheta_2} e_2)^\ast e_1+(-e^{i\vartheta_2}e_2)e_1^\ast]\cos(\a_1-\a_2)=\frac{1}{3}, \notag \label{iii.b:h12+h42}
	\\
	\\&&
	|e_1|^2\frac{\cos 2\a_1 \cos 2\a_2}{\cos 2\a_2-\cos 2\a_1}+|e_2|^2\frac{-\cos 2\a_1 \cos 2\a_2}{\cos 2\a_2-\cos 2\a_1}+\frac{\sqrt{-\cos2\a_1 \cos  2\a_2}}{|\cos2\a_2-\cos2\a_1|}[(-e^{i\vartheta_2} e_2)^\ast e_1+(-e^{i\vartheta_2}e_2)e_1^\ast]\cos(\a_1+\a_2)=0. \notag
	\label{iii.b:h12-h42}
	\\
	\end{eqnarray}
	\end{widetext}
	
	If $\vartheta_2=\frac{\p}{2}$ then $-e^{i \vartheta_2}=-i$. Then \eqref{iii.b:h12-h42} implies that
	\begin{eqnarray}
	&&
	\sqrt{-\cos 2\a_1 \cos 2\a_2}(|e_1|^2-|e_2|^2) \notag \\
	&&\pm (e_1 e_2^\ast-e_1^\ast e_2)\cos(\a_1+\a_2)i=0.
	\end{eqnarray}
	Since $\a_1\in (0,\p/4), \a_2\in(\p/4,\p/2)$ or $\a_1\in(\p/4,\p/2), \a_2\in (0,\p/4)$, we know $|e_1|=|e_2|,e_1 e_2^\ast-e_1^\ast e_2=0$. Applying them to \eqref{iii.b:h12+h42} we have $|e_1|=|e_2|=|e_3|=\frac{\sqrt{3}}{3}$. Then suppose $e_1=a+bi$. Since $e_1 e_2^\ast-e_1^\ast e_2=0$, we obtain $e_2=e_1=a+b i$. And \eqref{iii:di*ei} implies that $e_3=-2a-2bi$. It is a contradiction with  $|e_1|=|e_2|=|e_3|=\frac{\sqrt{3}}{3}$. Hence $W_3$ does not exist with  $\a_1\in (0,\p/4), \a_2\in(\p/4,\p/2)$ or $\a_1\in(\p/4,\p/2), \a_2\in (0,\p/4)$.
	
	If $\vartheta_2=\frac{3\p}{2}$, one can similarly show that $W_3$ does not exist with  $\a_1\in (0,\p/4), \a_2\in(\p/4,\p/2)$ or $\a_1\in(\p/4,\p/2), \a_2\in (0,\p/4)$.
\end{proof}
	
	\bibliographystyle{unsrt}
	
	\bibliography{20190319_sr3chm=one_zero_entry_in_V_and_W_}
	
\end{document}